\newcommand{\Alg}{\mathcal{A}}
\newcommand{\bG}{\mathbf{G}}
\newcommand{\cT}{\mathcal{T}}
\newcommand{\bX}{\mathbf{X}}
\newcommand{\cM}{\mathcal{M}}
\newcommand{\bp}{\mathbf{p}}
\newcommand{\bv}{\mathbf{v}}
\newcommand{\allOne}{\mathbf{1}}
\DeclareMathOperator{\dist}{dist}
\newcommand{\cN}{\mathcal{N}}
\newtheorem{theorem}{Theorem}
\newtheorem{lemma}{Lemma}
\newtheorem{corollary}{Corollary}
\title{Sublinear Time Shortest Path in Expander Graphs}
\author{Noga Alon\thanks{\texttt{nalon@math.princeton.edu}. Supported in part by NSF grant DMS-2154082
and BSF grant 2018267.}\\Princeton University \and Allan Gr\o nlund\thanks{\texttt{ag@kvantify.dk}.}\\Kvantify \and S\o ren Fuglede J\o rgensen\thanks{\texttt{sfj@kvantify.dk}.}\\Kvantify \and Kasper Green Larsen\thanks{\texttt{larsen@cs.au.dk}. Supported by a DFF Sapere Aude Research Leader Grant No. 9064-00068B.}\\Aarhus University \& Kvantify}
\date{}
\begin{document}

\maketitle

\begin{abstract}
Computing a shortest path between two nodes in an undirected unweighted graph is among the most basic algorithmic tasks. Breadth first search solves this problem in linear time, which is clearly also a lower bound in the worst case. However, several works have shown how to solve this problem in sublinear time in expectation when the input graph is drawn from one of several classes of random graphs. In this work, we extend these results by giving sublinear time shortest path (and short path) algorithms for expander graphs. We thus identify a natural deterministic property of a graph (that is satisfied by typical random regular graphs) which suffices for sublinear time shortest paths. The algorithms are very simple, involving only bidirectional breadth first search and short random walks. We also complement our new algorithms by near-matching lower bounds.
\end{abstract}

\section{Introduction}
Computing shortest paths in an undirected unweighted graph is among the most fundamental tasks in graph algorithms. In the single source case, the textbook breadth first search (BFS) algorithm computes such shortest paths in $O(m+n)$ time in a graph with $n$ nodes and $m$ edges. Linear time is clearly also a lower bound on the running time of any algorithm that is correct on all input graphs, even if we only consider computing a shortest $s$-$t$ path for a pair of nodes $s,t$, and not the shortest path from $s$ to all other nodes. Initial intuition might also suggest that linear time is necessary for computing the shortest path between two nodes $s,t$ in a random graph drawn from any reasonable distribution, such as an Erd\H{o}s-R\'{e}nyi random graph or a random $d$-regular graph. However, this intuition is incorrect and there exists an algorithm with a sublinear expected running time for many classes of random graphs~\cite{hyperbolic, kadabra, oldbibfs}. Moreover, the algorithm is strikingly simple! It is merely the popular practical heuristic of bidirectional BFS~\cite{biBFS}. In bidirectional BFS, one simultaneously runs BFS from the source $s$ and destination $t$, expanding the two BFS trees by one layer at a time. If the input graph is e.g. an Erd\H{o}s-R\'{e}nyi random graph, then it can be shown that the two BFS trees have a node in common after exploring only $O(\sqrt{n})$ nodes in expectation. If the node $v$ is first to be explored in both trees, then the path from $s \to v \to t$ in the two BFS trees form a shortest path between $s$ and $t$. The fact that only $O(\sqrt{n})$ nodes need to be explored intuitively follows from the birthday paradox and the fact that the nodes nearest to $s$ and $t$ are uniform random in an Erd\H{o}s-R\'{e}nyi random graph (although not completely independent). 
Note that for sublinear time graph algorithms to be meaningful, we assume that we have random access to the nodes and their neighbors. More concretely, we assume the nodes are indexed by integers $[n] = \{1,\dots,n\}$ and that we can query for the number of nodes adjacent to a node $v$, as well as query for the $j$'th neighbor of a node $v$. We remark that several works have also extended the bidirectional BFS heuristic to weighted input graphs and/or setups where heuristic estimates of distances between nodes and the source or destination are known~\cite{biBFS, bisearch, bisearchagain}. There are also works giving sublinear time algorithms for other natural graph problems under the assumption of a random input graph~\cite{hochbaum}.

A caveat of the previous works that give provable sublinear time shortest path algorithms, is that they assume a random input graph. In this work, we identify "deterministic" properties of graphs that may be exploited to obtain sublinear time $s$-$t$ shortest path algorithms. Concretely, we study shortest paths in expander graphs. An $n$-node $d$-regular (all nodes have degree $d$) graph $G$, is an $(n,d,\lambda)$-graph if the eigenvalues $\lambda_1 \geq \cdots \geq \lambda_n$ of the corresponding adjacency matrix $A$ satisfies $\max_{i \neq 1}|\lambda_i| \leq \lambda$. Note that the eigenvalues are real since $A$ is symmetric and real. We start by presenting a number of algorithmic results when the input graph is an expander.

\paragraph{Shortest $s$-$t$ Path.}
Our first contribution demonstrates that the simple bidirectional BFS algorithm efficiently computes the shortest path between most pairs of nodes $s, t$ in an expander:
\begin{theorem}
\label{thm:bibfsExp}
    If $G$ is an $(n,d,\lambda)$-graph, then for every node $s \in G$, every $0 < \delta < 1$, it holds for at least $(1-\delta)n$ nodes $t$, that bidirectional BFS between $s$ and $t$, finds a shortest $s$-$t$ path after visiting $O((d-1)^{\lceil (1/4) \lg_{d/\lambda}(n/\delta) \rceil})$ nodes.
\end{theorem}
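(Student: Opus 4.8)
The plan is to reduce the theorem to a single statement about how fast balls around $s$ grow: that all but a $\delta$-fraction of the vertices lie within distance roughly $\tfrac12\log_{d/\lambda}(n/\delta)$ of $s$. Write $B(s,r)$ for the ball of radius $r$ around $s$. If $\dist(s,t)=D$, then bidirectional BFS explores each side only to radius $\lceil D/2\rceil$ before the two frontiers meet and a shortest $s$-$t$ path is recovered; since the graph is $d$-regular, each ball it touches has size at most $1+d\sum_{i<\rho}(d-1)^i=O((d-1)^{\rho})$ with $\rho=\lceil D/2\rceil$, so the total number of visited nodes is $O((d-1)^{\lceil D/2\rceil})$. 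Thus it suffices to show that for at least $(1-\delta)n$ choices of $t$ we have $D\le \lceil \tfrac12\log_{d/\lambda}(n/\delta)\rceil=:r$; the claimed bound then follows by monotonicity of $\lceil\cdot\rceil$ together with the nested-ceiling identity $\lceil \lceil L/2\rceil/2\rceil=\lceil L/4\rceil$ for $L=\log_{d/\lambda}(n/\delta)$.

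The heart of the argument is therefore the following ball-size estimate, which I would prove spectrally: for every vertex $s$ and every radius $r$,
\[
|B(s,r)|\ \ge\ \frac{n}{1+n(\lambda/d)^{2r}}.
\]
To see this, let $e_s$ be the indicator vector of $s$ and set $\bv=A^re_s$, so $\bv_t$ counts walks of length $r$ from $s$ to $t$. Two facts drive the estimate. First, $\|\bv\|_1=d^r$ (the total number of length-$r$ walks out of $s$, by $d$-regularity, since $\allOne$ is an eigenvector of eigenvalue $d$), while $\|\bv\|_2^2=(A^{2r})_{ss}$; Cauchy--Schwarz in the form $\|\bv\|_1^2\le |\operatorname{supp}(\bv)|\cdot\|\bv\|_2^2$ then lower-bounds the support size by $d^{2r}/(A^{2r})_{ss}$. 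Second, expanding in an orthonormal eigenbasis $u_1,\dots,u_n$ and using that the all-ones vector is the top eigenvector of a $d$-regular graph, $(A^{2r})_{ss}=\tfrac{d^{2r}}{n}+\sum_{i\ge2}\lambda_i^{2r}(u_i)_s^2\le \tfrac{d^{2r}}{n}+\lambda^{2r}$, where the bound uses $\sum_i(u_i)_s^2=1$. Combining the two and noting $\operatorname{supp}(\bv)\subseteq B(s,r)$ (a length-$r$ walk cannot reach past distance $r$) yields the display. The crucial feature is that this sidesteps any layer-by-layer vertex-expansion argument, which is exactly where I expect the difficulty to lie: small balls in an expander need not grow multiplicatively (short cycles are allowed), so a naive ``each layer multiplies the ball by $d/\lambda$'' approach would stall at small radii. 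Passing through $\|\bv\|_1$ versus $\|\bv\|_2$ avoids this entirely, relying only on the spectral gap.

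Finally, I would set $r=\lceil \tfrac12\log_{d/\lambda}(n/\delta)\rceil$, so that $(d/\lambda)^{2r}\ge n/\delta$ and hence $n(\lambda/d)^{2r}\le\delta$; the ball-size estimate then gives $|B(s,r)|\ge n/(1+\delta)\ge(1-\delta)n$. Every $t\in B(s,r)$ satisfies $\dist(s,t)\le r$, so these at-least-$(1-\delta)n$ targets are precisely the good ones, and by the first paragraph bidirectional BFS finds a shortest $s$-$t$ path after visiting $O((d-1)^{\lceil \frac14\log_{d/\lambda}(n/\delta)\rceil})$ nodes, as claimed. The only routine points left to verify are the elementary ball-size upper bound (needed only in the regime $d\ge3$ where $d/\lambda>1$ is meaningful) and the exact bookkeeping of the ceilings.
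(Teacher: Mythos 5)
Your proof is correct, and it shares the paper's overall skeleton: both reduce the theorem to the claim that all but $\delta n$ nodes lie within distance roughly $\tfrac12\lg_{d/\lambda}(n/\delta)$ of $s$ (the paper's Lemma~\ref{lem:mostclose}), and then bound the bidirectional BFS cost by $O((d-1)^{\lceil D/2\rceil})$. Where you genuinely diverge is in the proof of that key lemma. The paper argues via a mixing-lemma-style bilinear estimate: it takes $Z$ to be the set of nodes $t$ with $(A^k)_{s,t}=0$, decomposes the indicator $\allOne_Z = (|Z|/n)\allOne + \beta u$ with $u\perp\allOne$, and extracts $|Z|\le(\lambda/d)^{2k}n^2$ from $0 = e_s^T A^k\allOne_Z \ge d^k|Z|/n - \beta\lambda^k$. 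You instead apply Cauchy--Schwarz to $\bv = A^r e_s$ in the form $\|\bv\|_1^2 \le |\operatorname{supp}(\bv)|\cdot\|\bv\|_2^2$, computing $\|\bv\|_1 = d^r$ and bounding the return probability $(A^{2r})_{ss} \le d^{2r}/n + \lambda^{2r}$, which gives $|B(s,r)| \ge n/(1+n(\lambda/d)^{2r})$. The two arguments are quantitatively equivalent (your bound on the set of unreached nodes is again $n^2(\lambda/d)^{2r}$), so neither wins on strength; the trade-off is that your version isolates a clean, reusable ball-size lower bound valid at every radius and needs only the diagonal entry $(A^{2r})_{ss}$ rather than an inner product against a set indicator, while the paper's decomposition generalizes more directly to counting walks between a vertex and an arbitrary vertex set. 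A further small point in your favor: you handle the ceiling bookkeeping explicitly via $\lceil\lceil L/2\rceil/2\rceil = \lceil L/4\rceil$, which the paper glosses over, and the caveats you flag ($d\ge 3$ for the $O((d-1)^\rho)$ BFS-size bound, and the two frontiers meeting after $\lceil D/2\rceil$ rounds) are treated at the same level of informality in the paper itself.
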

While the bound in Theorem~\ref{thm:bibfsExp} on the number of nodes visited may appear unwieldy at first, we note that it simplifies significantly for natural values of $d$ and $\lambda$. For instance, an $(n,d,\lambda)$-graph is Ramanujan if $\lambda \leq 2\sqrt{d-1}$. For Ramanujan graphs, and more generally for graphs with $\lambda = O(\sqrt{d})$, the bound in Theorem~\ref{thm:bibfsExp} simplifies to near-$\sqrt{n}$:
\begin{corollary}
\label{cor:biBFS}
    If $G$ is an $(n,d,O(\sqrt{d}))$-graph, then for every node $s \in G$, every $0 < \delta < 1$, it holds for at least $(1-\delta)n$ nodes $t$, that bidirectional BFS between $s$ and $t$, finds a shortest $s$-$t$ path after visiting $O((n/\delta)^{1/2 + O(1/\ln d)})$ nodes.
\end{corollary}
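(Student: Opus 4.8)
The plan is to derive Corollary~\ref{cor:biBFS} directly from Theorem~\ref{thm:bibfsExp} by substituting the hypothesis $\lambda = O(\sqrt{d})$ into the bound and simplifying; this is a calculation rather than a new argument. Write $N = n/\delta$, and let $c$ be the constant hidden in $\lambda = O(\sqrt{d})$, so that $\lambda \le c\sqrt{d}$. Theorem~\ref{thm:bibfsExp} guarantees that bidirectional BFS visits $O\bigl((d-1)^k\bigr)$ nodes, where $k = \lceil (1/4)\lg_{d/\lambda}N\rceil$, so it suffices to establish $(d-1)^k = N^{1/2 + O(1/\ln d)}$.

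First I would control the base of the logarithm. Since $\lambda \le c\sqrt{d}$ we have $d/\lambda \ge \sqrt{d}/c$, and hence $\lg_{d/\lambda}N = \ln N/\ln(d/\lambda) \le \ln N/(\tfrac12\ln d - \ln c)$. The crucial point is to handle the ceiling \emph{additively}, via $\lceil x\rceil \le x+1$, rather than through any multiplicative bound: something like $\lceil x\rceil \le 2x$ would double the exponent and turn the intended $N^{1/2}$ into $N^{1}$, defeating the purpose. With the additive bound,
\[
  k \;\le\; \frac{\ln N}{2\ln d - 4\ln c} + 1 .
\]

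Taking logarithms and using $\ln(d-1) \le \ln d$ then gives
\[
  \ln\bigl((d-1)^k\bigr) \;=\; k\ln(d-1) \;\le\; \frac{\ln N}{2}\cdot\frac{1}{1 - 2\ln c/\ln d} + \ln d .
\]
For $d$ large enough the factor $1/(1 - 2\ln c/\ln d)$ equals $1 + O(1/\ln d)$, so the first term is $\tfrac12\ln N + O(\ln N/\ln d)$, which exponentiates to exactly $N^{1/2 + O(1/\ln d)}$, as desired.

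The one place that requires care, and hence the main obstacle, is the residual additive $\ln d$ term produced by the ceiling, corresponding to an extra multiplicative factor of at most $d-1$. For constant $d$ this factor is simply absorbed into the $O(\cdot)$; more generally I would fold it into the exponent, noting that an extra factor of $d$ changes $\log_N$ of the whole bound by $\ln d/\ln N$, which is lower order in the regime where the corollary is meaningful. Combining the three contributions (the log-base estimate, the main $\tfrac12\ln N$ term with its $O(1/\ln d)$ correction, and the absorbed ceiling factor) yields the claimed bound of $O\bigl((n/\delta)^{1/2 + O(1/\ln d)}\bigr)$ visited nodes.
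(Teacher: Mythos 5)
Your proposal is correct and follows essentially the same route as the paper: substitute $\lambda \le c\sqrt{d}$ into Theorem~\ref{thm:bibfsExp}, observe that the base of the logarithm becomes $\Omega(\sqrt{d})$ so the $1/4$ turns into a $1/2$ with denominator $\ln d - O(1)$, and convert that $-O(1)$ into a multiplicative $(1+O(1/\ln d))$ correction in the exponent. Your treatment is in fact slightly more careful than the paper's, since you explicitly track the additive ceiling term (the extra factor of at most $d-1$), which the paper absorbs silently into its asymptotic notation.
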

We also demonstrate that the bound can be tightened even further for Ramanujan graphs:
\begin{theorem}
\label{thm:biRamanujan}
    If $G$ is a $d$-regular Ramanujan graph where $d \geq 3$, then for every node $s \in G$, it holds for at least $(1-o(1))n$ nodes $t$, that bidirectional BFS between $s$ and $t$, finds a shortest $s$-$t$ path after visiting $O( \sqrt{n}\cdot  \ln^{3/2}(n))$ nodes.
\end{theorem}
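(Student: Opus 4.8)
The plan is to reduce Theorem~\ref{thm:biRamanujan} to a sharp bound on the distance distribution from $s$, and then to establish that distance bound by a Chebyshev-polynomial sharpening of the spectral argument behind Theorem~\ref{thm:bibfsExp}. For the reduction, recall that in a $d$-regular graph the ball $B_s(\rho)=\{t:\dist(s,t)\le\rho\}$ has size $O((d-1)^\rho)$, and that (exactly as in the analysis supporting Theorem~\ref{thm:bibfsExp}) bidirectional BFS between $s$ and $t$ returns a shortest path while visiting $O((d-1)^{\lceil\dist(s,t)/2\rceil})$ nodes, since on a regular graph both frontiers grow to radius $\approx\dist(s,t)/2$ before they meet. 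It therefore suffices to find a radius $r$ with $(d-1)^{\lceil r/2\rceil}=O(\sqrt n\,\ln^{3/2}n)$ such that all but $o(n)$ vertices lie within distance $r$ of $s$; for each such $t$ the node count is then $O((d-1)^{\lceil\dist(s,t)/2\rceil})\le O((d-1)^{\lceil r/2\rceil})=O(\sqrt n\,\ln^{3/2}n)$, and these $t$ form a $(1-o(1))n$ fraction of the vertices.

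To control $|B_s(r)^c|$ I would use a spectral spreading argument. Write $A=\sum_i\lambda_i v_iv_i^\top$ with $v_1=\allOne/\sqrt n$, and let $P$ be any real polynomial of degree $r$ normalized by $P(d)=1$. Expanding $e_s=\sum_i v_i(s)v_i$, the vector $P(A)e_s$ is a degree-$r$ polynomial in $A$ applied to $e_s$, hence supported on $B_s(r)$; it agrees with $\allOne/n$ in the $v_1$ direction, and $\|P(A)e_s-\allOne/n\|_2^2=\sum_{i\ge2}P(\lambda_i)^2v_i(s)^2\le\max_{|\mu|\le\lambda}|P(\mu)|^2$, using that the non-principal eigenvalues lie in $[-\lambda,\lambda]$ and $\sum_{i\ge2}v_i(s)^2\le1$. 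Since $P(A)e_s$ vanishes off $B_s(r)$, the uniform distribution's mass there satisfies $|B_s(r)^c|/n\le\|\allOne/n-P(A)e_s\|_1\le\sqrt n\,\|\allOne/n-P(A)e_s\|_2$, so $|B_s(r)^c|\le n^{3/2}\max_{|\mu|\le\lambda}|P(\mu)|$.

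It remains to pick $P$ minimizing $\max_{|\mu|\le\lambda}|P(\mu)|$ subject to $P(d)=1$, the classical minimax problem solved by $P(\mu)=T_r(\mu/\lambda)/T_r(d/\lambda)$ with $T_r$ the degree-$r$ Chebyshev polynomial; since $\max_{[-1,1]}|T_r|=1$ this gives $\max_{|\mu|\le\lambda}|P(\mu)|=1/T_r(d/\lambda)$, and $d/\lambda>1$ for $d\ge3$ because $(d-2)^2>0$. Here the Ramanujan value $\lambda=2\sqrt{d-1}$ is precisely what makes the bound clean: with $x=d/\lambda=d/(2\sqrt{d-1})$ one gets $\sqrt{x^2-1}=(d-2)/(2\sqrt{d-1})$ and hence $x+\sqrt{x^2-1}=\sqrt{d-1}$, so $T_r(d/\lambda)\ge\tfrac12(x+\sqrt{x^2-1})^r=\tfrac12(d-1)^{r/2}$ and $|B_s(r)^c|\le2n^{3/2}(d-1)^{-r/2}$. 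Choosing $r=\lceil\lg_{d-1}(n\ln^3n)\rceil$ yields $(d-1)^r\ge n\ln^3n$, whence $|B_s(r)^c|\le2n\,\ln^{-3/2}n=o(n)$, while $(d-1)^{\lceil r/2\rceil}=O((d-1)^{r/2})=O(\sqrt{n\ln^3n})=O(\sqrt n\,\ln^{3/2}n)$, which combined with the reduction finishes the proof.

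The crux — and the reason this improves on Theorem~\ref{thm:bibfsExp} — is the extremal polynomial step. The generic argument effectively uses the monomial $P(\mu)=(\mu/d)^r$, the random-walk kernel, with error $(\lambda/d)^r$; this forces $r$ to exceed roughly $\tfrac12\lg_{d/\lambda}n$, and for small $d$ (already $d=3$, where $d/\lambda=3/(2\sqrt2)$ is only barely above $1$) this makes $r$ a large constant multiple of $\ln n$ and the node count polynomially larger than $\sqrt n$. Replacing the monomial by the Chebyshev-optimal polynomial improves the effective exponential base from $d/\lambda$ to $\sqrt{d-1}$, pushing $r$ down to $\lg_{d-1}n+O(\ln\ln n)$ and the node count down to $\sqrt n\,\mathrm{polylog}(n)$ for every $d\ge3$. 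The two points I would check most carefully are (i) the minimax value $T_r(d/\lambda)$ at the Ramanujan $\lambda$, and (ii) that $P(A)e_s$, although not a probability vector, is genuinely supported on $B_s(r)$, so the $\ell_1$-mass comparison is legitimate; the parity issues that complicate random-walk supports never enter, since only containment of the support in $B_s(r)$ is used.
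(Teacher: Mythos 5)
Your proof is correct, and it takes a genuinely different route from the paper's. The paper proves Theorem~\ref{thm:biRamanujan} by a short reduction to an external result: it invokes the Lubetzky--Peres cutoff theorem for Ramanujan graphs (Theorem~\ref{thm:ramanujan}), which says that for every $s$ all but $o(n)$ vertices $t$ satisfy $\lvert\dist(s,t)-\lg_{d-1}n\rvert \leq 3\lg_{d-1}\lg n$, and then applies the same bidirectional-BFS counting reduction you use. You replace that citation with a self-contained Chebyshev-polynomial sharpening of the spectral argument behind Lemma~\ref{lem:mostclose}: instead of the monomial $(\mu/d)^k$ implicit there, you use the minimax polynomial $T_r(\mu/\lambda)/T_r(d/\lambda)$, and the identity $\frac{d}{2\sqrt{d-1}}+\frac{d-2}{2\sqrt{d-1}}=\sqrt{d-1}$ at the Ramanujan point gives $T_r(d/\lambda)\geq\tfrac12(d-1)^{r/2}$, hence $|B_s(r)^c|\leq 2n^{3/2}(d-1)^{-r/2}$, and your choice $r=\lceil\lg_{d-1}(n\ln^3 n)\rceil$ delivers exactly the one-sided tail bound the reduction needs. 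I checked the points you flagged and the rest: $P(A)e_s$ is indeed supported on $B_s(r)$ (it is a degree-$r$ polynomial in $A$ applied to $e_s$), its $v_1$-component is exactly $\allOne/n$ because $P(d)=1$, the $\ell_1$-to-$\ell_2$ comparison is legitimate, and the Chebyshev evaluation and parameter bookkeeping are correct. As for what each approach buys: the paper's route is shorter but rests on a deep theorem whose two-sided concentration is not actually needed, and the same reference also yields the near-Ramanujan extension mentioned after the paper's proof; your route is elementary and self-contained, and it in fact strengthens Lemma~\ref{lem:mostclose} for general $(n,d,\lambda)$-graphs by improving the relevant decay base from $d/\lambda$ to $d/\lambda+\sqrt{(d/\lambda)^2-1}$, which is precisely what rescues the small-$d$ regime (already $d=3$) where the bound of Theorem~\ref{thm:bibfsExp} exceeds $n$ and is vacuous. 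One minor simplification: bounding only the coordinates outside $B_s(r)$ in $\ell_2$ directly, via $|B_s(r)^c|/n^2=\sum_{t\notin B_s(r)}(1/n)^2\leq\|\allOne/n-P(A)e_s\|_2^2$, avoids the Cauchy--Schwarz loss and gives the slightly cleaner $|B_s(r)^c|\leq 4n^2(d-1)^{-r}$, though this only affects logarithmic factors.
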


\paragraph{Short $s$-$t$ Path.}
One drawback of bidirectional BFS in expanders, is that it is only guaranteed to find a shortest path efficiently for \emph{most} pairs of nodes $s,t$. Motivated by this shortcoming, we also present a simple randomized algorithm for finding a short, but not necessarily shortest, $s$-$t$ path. For any parameter $0 < \delta < 1$, the algorithm starts by growing a BFS tree from $s$ until $\Theta(\sqrt{n \ln(1/\delta)})$ nodes have been explored. It then performs $O(\sqrt{n \ln(1/\delta)}/\lg_{d/\lambda}(n))$ random walks starting at $t$. Each of these random walks run for $O(\lg_{d/\lambda}(n))$ steps. If any of these walks discover a node in the BFS tree, it has found an $s$-$t$ path of length $O(\lg_{d/\lambda}(n))$.

We show that this BFS + Random Walks algorithm has a high probability of finding an $s$-$t$ path:
\begin{theorem}
\label{thm:path}
    If $G$ is an $(n,d,\lambda)$-graph with $\lambda \leq d/2$, then for every pair of nodes $s,t$, every $0 < \delta < 1$, it holds with probability at least $1-\delta$, that BFS + Random Walks between $s$ and $t$, finds an $s$-$t$ path of length $O(\lg_{d/\lambda}(n))$ while visiting $O(\sqrt{n \ln(1/\delta)})$ nodes.
\end{theorem}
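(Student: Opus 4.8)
The plan is to fix the set $S\subseteq V(G)$ of nodes explored by the BFS from $s$ and to establish two facts: (i) $S$ is contained in a ball of radius $O(\lg_{d/\lambda}(n))$ around $s$, and $|S|=\Theta(\sqrt{n\ln(1/\delta)})$; and (ii) a single random walk of length $L=O(\lg_{d/\lambda}(n))$ started at $t$ meets $S$ with probability $q=\Omega(\lg_{d/\lambda}(n)\cdot|S|/n)$. Granting these, the $k=\Theta(\sqrt{n\ln(1/\delta)}/\lg_{d/\lambda}(n))$ walks are mutually independent, so $qk=\Omega(\lg_{d/\lambda}(n)\cdot|S|/n)\cdot\Theta(\sqrt{n\ln(1/\delta)}/\lg_{d/\lambda}(n))=\Omega(\ln(1/\delta))$, whence the probability that no walk meets $S$ is at most $(1-q)^k\le e^{-qk}\le\delta$ once the hidden constants are fixed. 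If some walk meets $S$ at a node $v$, concatenating the BFS-tree path $s\to v$ (length $\dist(s,v)=O(\lg_{d/\lambda}(n))$) with the walk segment $v\to t$ (length $\le L=O(\lg_{d/\lambda}(n))$) yields an $s$-$t$ path of length $O(\lg_{d/\lambda}(n))$; the total number of nodes visited is $O(\sqrt{n\ln(1/\delta)})$, since the BFS contributes $\Theta(\sqrt{n\ln(1/\delta)})$ and the walks at most $kL$ of the same order.

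For (i) I would invoke the standard vertex expansion of $(n,d,\lambda)$-graphs: as long as a ball contains at most $n/2$ nodes, each further BFS layer multiplies its size by a factor bounded below by a constant exceeding $1$ (a function of $d/\lambda$, available from the expander mixing lemma / Tanner's bound). Thus a ball of radius $R=O(\lg_{d/\lambda}(n))$ already contains $\Theta(\sqrt{n\ln(1/\delta)})$ nodes, so the BFS halts at radius $O(\lg_{d/\lambda}(n))$ with $|S|=\Theta(\sqrt{n\ln(1/\delta)})$.

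The heart of the argument is (ii). Let $M=A/d$ be the symmetric, doubly stochastic walk matrix; since every nontrivial eigenvalue of $M$ has absolute value at most $\lambda/d\le 1/2$, we may write $M^i=\tfrac1n\allOne\allOne^\top+E_i$ with $\|E_i\|_2\le(\lambda/d)^i$. Writing $p_i=M^i e_t$ for the step-$i$ distribution and $X_i$ for the walk's position, I would pick a mixing threshold $L_0=\lceil\lg_{d/\lambda}(n)\rceil$, so that $(\lambda/d)^{L_0}\le 1/n$ gives, for all $i\ge L_0$, both the pointwise bound $p_i(u)\le 2/n$ and the lower bound $\Pr[X_i\in S]=\allOne_S^\top M^i e_t\ge |S|/n-\sqrt{|S|}\,(\lambda/d)^i\ge |S|/(2n)$. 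Setting $L=2L_0$ and $X=\sum_{i=L_0}^{L}\allOne[X_i\in S]$, this yields $\E[X]\ge (L-L_0)\,|S|/(2n)=\Omega(\lg_{d/\lambda}(n)\cdot|S|/n)$, and the second moment method reduces (ii) to proving $\E[X^2]=O(\E[X])$.

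The crux — and the step I expect to be the main obstacle — is controlling the pairwise correlations in $\E[X^2]=\E[X]+2\sum_{i<j}\Pr[X_i\in S,\,X_j\in S]$. Because $S$ is a connected ball and the walk moves locally, visits to $S$ are positively correlated (\emph{clumping}), and the naive estimate $\Pr[X_i\in S,\,X_j\in S]\le \Pr[X_i\in S]\cdot(\lambda/d)^{j-i}\sqrt{|S|}$ over-counts the correlation by a factor $\sqrt{|S|}$ and makes the method fail. The fix is to exploit near-uniformity in the mixed phase: using $p_i(u)\le 2/n$ for $i\ge L_0$, I would instead write $\Pr[X_i\in S,\,X_j\in S]=\sum_{u\in S}p_i(u)\,(M^{j-i}\allOne_S)_u\le \tfrac2n\,\allOne_S^\top M^{j-i}\allOne_S\le \tfrac2n\bigl(|S|^2/n+(\lambda/d)^{j-i}|S|\bigr)$, where the quadratic-form bound $\allOne_S^\top E_{j-i}\allOne_S\le(\lambda/d)^{j-i}\|\allOne_S\|_2^2=(\lambda/d)^{j-i}|S|$ replaces the lossy one-sided estimate. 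Summing the geometric series in $j-i$ (convergent since $\lambda/d\le 1/2$) gives $2\sum_{i<j}\Pr[X_i\in S,\,X_j\in S]\le O\bigl((L-L_0)^2|S|^2/n^2\bigr)+O\bigl((L-L_0)|S|/n\bigr)=O(\E[X]^2)+O(\E[X])$. Since $\E[X]=O(1)$, the linear term dominates, so $\E[X^2]=O(\E[X])$ and $q\ge \E[X]^2/\E[X^2]=\Omega(\E[X])=\Omega(\lg_{d/\lambda}(n)\cdot|S|/n)$, establishing (ii) and completing the proof.
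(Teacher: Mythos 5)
Your proof is correct, and while it shares the paper's overall scaffold --- grow a BFS set $S$ of size $\Theta(\sqrt{n\ln(1/\delta)})$, run $\Theta(\sqrt{n\ln(1/\delta)}/\lg_{d/\lambda}(n))$ independent walks from $t$, let each walk mix for $\Theta(\lg_{d/\lambda}(n))$ steps so that its position is within $1/n$ of uniform (your bound $\|E_i\|_2\le(\lambda/d)^i$ is exactly the paper's Theorem~\ref{thm:fastmix}), and finish by independence across walks --- the heart of the argument, your step (ii), is genuinely different. The paper upper-bounds the probability that a walk \emph{avoids} $V_s$: conditioned on the near-uniform position after the mixing prefix, it takes a union bound over all length-$\lg_{d/\lambda}(n)$ paths confined to $V(G)\setminus V_s$, whose number it controls with the spectral path-counting bound of Theorem~\ref{thm:hardavoid} (from \cite{alonDerand}); the hypothesis $\lambda\le d/2$ enters precisely there, and the outcome is a per-walk avoidance probability of at most $\exp(-\Omega(\lg_{d/\lambda}(n)\,|S|/n))$. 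You instead lower-bound the probability of \emph{hitting} $S$ via a second-moment (Paley--Zygmund) argument on the number of post-mixing visits to $S$, taming the clumping correlations with the pointwise bound $p_i(u)\le 2/n$ combined with the quadratic-form bound $\allOne_S^T M^{j-i}\allOne_S\le |S|^2/n+(\lambda/d)^{j-i}|S|$, and you use $\lambda\le d/2$ only to sum a geometric series. Both routes give per-walk success probability $\Omega(\lg_{d/\lambda}(n)\,|S|/n)$ and hence overall failure probability $\exp(-\Omega(k^2/n))$. What your route buys is self-containedness: it needs nothing beyond the spectral decomposition of the walk matrix, whereas the paper imports the path-counting theorem of \cite{alonDerand}. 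What the paper's route buys is brevity, and a per-walk bound that arrives directly in exponential form.

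One detail in your step (i) needs repair. Per-layer expansion by ``a constant factor exceeding $1$'' only yields BFS depth $O(\ln n)$, which is $O(\lg_{d/\lambda}(n))$ only when $d/\lambda=O(1)$; since the theorem allows $d/\lambda$ to grow, you need per-layer expansion $(d/\lambda)^{\Omega(1)}$, which Tanner's bound does provide while the ball is small, so the fact you cite suffices but the statement as written does not. Simpler, and what the paper does: the depth of the BFS tree is at most the diameter, which is at most $\lceil\lg_{d/\lambda}(n)\rceil$ by Chung's theorem~\cite{Chung1989DiametersAE}. With that substitution, the claimed path length $O(\lg_{d/\lambda}(n))$ follows cleanly.
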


\paragraph{Lower Bounds.}
While bidirectional BFS, or BFS + Random Walks, are natural algorithms for finding $s$-$t$ paths efficiently, it is not a priori clear that better strategies do not exist. One could e.g. imagine sampling multiple nodes in an input graph, growing multiple small BFS trees from the sampled nodes and somehow use this to speed up the discovery of an $s$-$t$ path. To rule this approach out, we complement the algorithms presented above with lower bounds. For proving lower bounds, we consider distributions over input graphs and show that any algorithm that explores few nodes fails to find an $s$-$t$ path with high probability in such a random input graph. As Erd\H{o}s-R\'{e}nyi random graphs (with large enough edge probability) and random $d$-regular graphs are both expanders with good probability, we prove lower bounds for both these random graph models. The distribution of an Erd\H{o}s-R\'{e}nyi random graph on $n$ nodes is defined from a parameter $0 < p < 1$. In such a random graph, each edge is present independently with probability $p$. A random $d$-regular graph on the other hand, is uniform random among all $n$-node graphs where every node has degree $d$. 

Our lower bounds hold even for the problem of reporting an arbitrary path connecting a pair of nodes $s, t$, not just for reporting a short/shortest path. Furthermore, our lower bounds are proved in a model where we allow node-incidence queries. A node-incidence query is specified by a node index $v$ and is returned the set of all edges incident to $v$. Our first lower bound holds for Erd\H{o}s-R\'{e}nyi random graphs:
\begin{theorem}
\label{thm:lbErdos}
    Any (possibly randomized) algorithm for reporting an $s$-$t$ path in an Erd\H{o}s-R\'{e}nyi random graph, where edges are present with probability $p \geq 1.5 \ln(n)/n$, either makes $\Omega(1/(p\sqrt{n}))$ node-incidence queries or outputs a valid path with probability at most $o(1)+p$.
\end{theorem}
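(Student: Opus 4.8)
The plan is to show that an algorithm making only $q = o(1/(p\sqrt{n}))$ node-incidence queries cannot connect $s$ to $t$ through the edges it has actually uncovered, and that without such a connection it can report a valid path only by correctly \emph{guessing} an unqueried edge, which succeeds with probability at most $p$. Throughout I fix the deterministic algorithm; the randomized statement follows by averaging the bound over the algorithm's internal coins. I may grant it free queries of $s$ and $t$, which changes $q$ by an additive constant. Let $Q$ be the set of queried nodes and let $H$ be the \emph{revealed graph}, whose edges are exactly the edges of $G$ incident to $Q$; since a node-incidence query on $v$ returns precisely $v$'s incident edges, $H$ captures all edge information ever available to the algorithm.

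The first step is the reduction $\Pr[\text{success}] \le \Pr[s,t \text{ connected in } H] + p$. The output path is a fixed function of the transcript. If $s$ and $t$ are \emph{not} connected in $H$, any reported path must contain at least one edge that was never revealed; such an edge has both endpoints outside $Q$ (every edge with a queried endpoint is revealed, either present or absent), so by the principle of deferred decisions it is present in $G$ independently with probability $p$ conditioned on the transcript. Hence, conditioned on non-connectivity, the path is valid with probability at most $p$, which yields the inequality. It therefore suffices to prove $\Pr[s,t \text{ connected in } H] = o(1)$.

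The core of the argument bounds the probability that the explored regions of $s$ and $t$ ever meet. I would expose the queries one at a time and track the seen set $S = Q \cup N(Q)$ and the $H$-components $C_s \ni s$ and $C_t \ni t$. First I bound $\E[|S|] \le O(qnp)$: each query reveals edges to the not-yet-seen nodes, each present independently with probability $p$, so the expected number of newly seen nodes per query is at most $np$, over $q$ queries. Next, $C_s$ and $C_t$ can merge only at a query of a node $v$ having a \emph{freshly revealed} edge into the opposite component; conditioned on the history, this probability is at most $p\,|S|$ by a union bound over the at most $|S|$ candidate non-queried endpoints, each a fresh $\mathrm{Bernoulli}(p)$ edge. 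Summing the per-step conditional probabilities over the $q$ queries and using $|S| \le |S_{\mathrm{final}}|$ gives $\Pr[s,t \text{ connected in } H] \le p\,q\,\E[|S_{\mathrm{final}}|] = O(q^2 n p^2)$, which is $o(1)$ precisely when $q = o(1/(p\sqrt{n}))$. Together with the reduction this gives success probability $o(1)+p$, i.e. the contrapositive of the theorem.

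The main obstacle is the adaptivity of the algorithm: since queries depend on previously revealed edges, it can deliberately grow a region by querying neighbors, so individual ``links'' between queried nodes are not rare. The key observation I would emphasize is that such deliberate growth only enlarges a \emph{single} region, whereas connecting $s$ to $t$ requires the two separately grown regions to collide; quantifying this collision is exactly the birthday-type bound $O(q^2np^2)$, and the delicate point is keeping the deferred-decision exposure consistent with the adaptive query order, so that each cross edge genuinely is a fresh $\mathrm{Bernoulli}(p)$ variable at the moment its probability is charged. A secondary technical point is controlling $|S|$; rather than conditioning on a high-probability bound on the maximum degree (which would distort the fresh-edge distribution), I work with $\E[|S|] \le O(qnp)$ directly, which is all the union bound requires.
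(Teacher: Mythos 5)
Your proof is correct, and its skeleton matches the paper's: derandomize by fixing the coins, condition on the query transcript, split the success probability as $\Pr[\textrm{valid}] \leq \Pr[s,t\textrm{ connected in revealed graph}] + \Pr[\textrm{valid}\mid\textrm{disconnected}]$, bound the second term by $p$ via deferred decisions on unrevealed edges, and bound the first term by a per-step, birthday-type union bound of order $q^2np^2$. Where you genuinely diverge is the bookkeeping that controls the size of the explored region. The paper introduces \emph{useless} traces (disconnected \emph{and} at most $2pnk$ revealed edges after $k$ queries) and maintains this deterministic invariant with a Chernoff bound, paying a $1/\sqrt{n}$ failure probability per step; this is exactly where the hypothesis $p \geq 1.5\ln(n)/n$ enters its argument, and it enters twice (to make the Chernoff tail $e^{-np/3} \leq 1/\sqrt{n}$, and to make the accumulated $q/\sqrt{n}$ term $o(1)$). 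You instead never condition on the region being small: you charge each step's merge probability, at most $O(p|S_{k-1}|)$ pointwise given the transcript, in expectation, and use monotonicity $S_{k-1} \subseteq S_{\mathrm{final}}$ together with $\E[|S_{\mathrm{final}}|] = O(qnp)$ to get $\Pr[\textrm{merge}] \leq O(pq\,\E[|S_{\mathrm{final}}|]) = O(q^2np^2)$; the tower-property step is valid as you set it up since the conditional bound holds pointwise. This first-moment bookkeeping is a real (if modest) gain: it avoids Chernoff entirely, and your query lower bound holds with no lower-bound assumption on $p$ at all, whereas the paper's proof needs $p \geq 1.5\ln(n)/n$ for its error terms, not just for the statement to be natural.

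One minor imprecision to fix: in the disconnected case you assert that any reported path ``must contain at least one edge that was never revealed.'' That is not literally true --- the reported path could consist entirely of pairs incident to queried nodes, one of which was revealed to be \emph{absent}. In that case the path is invalid with probability $1$ conditioned on the transcript, so the bound $p$ still holds; the paper handles this by phrasing the claim as a disjunction (either the path uses a pair with both endpoints unqueried, or it is valid with probability $0$). Your parenthetical remark shows you are aware of the dichotomy; just state the two cases explicitly rather than folding them into a single false assertion.
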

Note that the lower bound assumes $p \geq 1.5 \ln(n)/n$. This is a quite natural assumption since for $p \ll \ln(n)/n$, the input graph is disconnected with good probability. The concrete constant $1.5$ is mostly for simplicity of the proof. We remark that the additive $p$ in the success probability is tight as an algorithm always reporting the direct path consisting of the single edge $(s,t)$ is correct with probability $p$. Also observe that the number of edges discovered after $O(1/(p\sqrt{n}))$ node-incidence queries is about $O(pn/(p\sqrt{n})) = O(\sqrt{n})$ since each node has $p(n-1)$ incident edges in expectation. 

For the case of random $d$-regular graphs, we show the following lower bound for constant degree $d$:
\begin{theorem}
\label{thm:lbRegular}
    Any (possibly randomized) algorithm for reporting an $s$-$t$ path in a random $d$-regular graph with $d=O(1)$, either makes $\Omega(\sqrt{n})$ node-incidence queries or outputs a valid path with probability at most $o(1)$.
\end{theorem}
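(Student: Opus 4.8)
The plan is to argue the contrapositive: if the algorithm makes only $q = o(\sqrt{n})$ node-incidence queries, then it outputs a valid $s$-$t$ path with probability $o(1)$. I would work in the configuration model, where each of the $n$ vertices carries $d$ half-edges (stubs) and the $dn$ stubs are matched uniformly at random. Since a uniform random $d$-regular graph is exactly the configuration model conditioned on producing a simple graph, and this event has probability $\Omega(1)$ for constant $d$, any upper bound of the form $o(1)$ on the success probability in the configuration model transfers to the uniform model up to a constant factor. Throughout I would reveal matchings lazily by the principle of deferred decisions: a stub's partner is sampled only when the vertex owning it is queried, and at that moment the partner is uniform among the currently unmatched stubs, independent of how the algorithm chose its queries. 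Let $Q$ denote the (at most $q$) queried vertices and let $H$ be the revealed subgraph, consisting of all matched edges incident to $Q$.

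Next I would control the component structure of $H$ via a collision count. Revealing the matched partners of queried vertices one stub at a time, call a step a collision if the sampled partner belongs to a vertex that has already been seen (queried, or previously appearing as a neighbour). At every step at most $O(dq)$ stubs belong to seen vertices out of $dn - O(dq)$ unmatched stubs, so each of the at most $dq$ revealed stubs collides with probability $O(dq/n)$; hence the expected number of collisions is $O(d^2 q^2 / n) = O(q^2/n)$. The key structural observation is that if no collision occurs then $H$ is a forest whose trees are grown disjointly around the queries, so the components $C_s$ and $C_t$ of $s$ and $t$ in $H$ are distinct. By Markov's inequality it follows that $\Pr[C_s = C_t] \le \Pr[\text{some collision}] = O(q^2/n) = o(1)$ whenever $q = o(\sqrt{n})$.

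Finally I would handle the possibility that the algorithm reports a path using edges it has not verified. For any edge $e$ absent from the revealed graph $H$, I would show $\Pr[e \in G \mid H] = O(d/n)$: if an endpoint of $e$ lies in $Q$ the edge is already determined (so the probability of it being a missed real edge is $0$), and otherwise the conditional matching of the remaining free stubs is uniform, so the at most $d^2$ stub-pairs realizing $e$ each close with probability $O(1/(dn))$. When $C_s \neq C_t$, any $s$-$t$ path must contain an edge leaving $C_s$, and such a crossing edge cannot lie in $H$ (a revealed edge would place both endpoints in the same component); hence the reported path is valid with conditional probability $O(d/n)$. Combining the two cases gives total success probability at most $\Pr[C_s = C_t] + O(d/n) = o(1) + O(1/n) = o(1)$, as desired. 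I expect the main difficulty to be the rigorous bookkeeping of adaptivity: one must verify that the deferred-decisions coupling keeps the partner distribution uniform after arbitrary adaptive queries, so that both the collision bound and the conditional edge bound $O(d/n)$ remain valid regardless of the algorithm's strategy, and that conditioning on the full revealed transcript, rather than merely on $H$ as an abstract graph, does not skew these estimates.
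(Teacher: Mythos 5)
Your proposal follows essentially the same route as the paper's proof: the paper likewise reduces to the perfect-matching (configuration) model --- its Lemma~\ref{lem:matching} loses exactly the $\exp(-O(d^2)) = \Omega(1)$ simplicity probability you invoke --- and then analyzes adaptive queries by conditioning on the full revealed transcript (its ``trace''), bounding per query the probability that newly revealed edges merge the components of $s$ and $t$, and separately bounding by $d^2/(nd-1-2dq) = O(d/n)$ the probability that an unrevealed edge on the reported path is present. Your collision count is a mild repackaging of the paper's component-tracking step (you bound the probability of \emph{any} merge among revealed components, the paper bounds only merges of $V_s$ and $V_t$), and both rest on the same union bound over the at most $d^2$ stub pairs between two groups, each matched with probability $O(1/(dn - O(dq)))$ under the conditional uniform matching.

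One step of your argument is false as written, though easily repaired. You declare a collision only when a sampled partner belongs to a vertex that was ``queried, or previously appearing as a neighbour.'' Since $s$ and $t$ need not have been queried or revealed, an edge from a queried vertex directly into $t$ is \emph{not} a collision under this definition, yet it can make $C_s = C_t$: for instance, query $s$ and have one of its stubs matched to a stub of $t$, or query $s$, then query its revealed neighbour $v$, and have a stub of $v$ matched into $t$. So ``no collision implies $C_s \neq C_t$'' fails. The fix is to initialize the seen set as $\{s,t\}$ before any queries; this adds only $2d$ stubs to the seen count, keeps the per-step collision probability at $O(dq/n)$, and then the no-collision event genuinely forces every revealed edge to attach a fresh vertex to an existing component, so components never merge and $C_s$, $C_t$, which start as distinct singletons, stay distinct. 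A second, smaller omission: your reduction needs the algorithm's behaviour to be defined when a group query returns a self-loop or multi-edge (the paper's simulation aborts in that case); conveniently, such responses are themselves collisions under the corrected definition, so declaring failure there is already absorbed into your $O(q^2/n)$ bound. With these two patches your argument is correct and matches the paper's.
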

We remark that a random $d$-regular graph is near-Ramanujan with probability $1-o(1)$~as proved in \cite{friedman08}, confirming a conjecture raised in \cite{alon86}. A near-Ramanujan graph is an $(n,d,\lambda)$-expander with $\lambda \leq 2\sqrt{d-1}+o(1)$. Thus our upper bounds in Theorem~\ref{thm:bibfsExp} and Theorem~\ref{thm:path} nearly match this lower bound.

\paragraph{Overview.}
In Section~\ref{sec:upper}, we present our upper bound results and prove the claims in Theorem~\ref{thm:bibfsExp} and Theorem~\ref{thm:path}. The upper bounds are all simple algorithms and also have simple proofs using well-known facts about expanders.

In Section~\ref{sec:lower}, we prove our lower bounds. These proofs are more involved and constitute the main technical contributions of this work.

\section{Upper Bounds}
\label{sec:upper}
In the following, we present and analyse simple algorithms for various $s$-$t$ reachability problems in expander graphs.

\subsection{Shortest Path}
Let $G$ be an $(n,d,\lambda)$-graph and consider the following bidirectional BFS algorithm for finding a shortest path between a pair of nodes $s,t$: grow a BFS tree $\cT_s$ from $s$ and a BFS tree $\cT_t$ from $t$ simultaneously. In each iteration, the next layer of $\cT_s$ and $\cT_t$ is computed and as soon as a node $v$ appears in both trees, we have found a shortest path from $s$ to $t$, namely the path $s \to v \to t$ in the two BFS trees.

We show that this algorithm is efficient for most pairs of nodes $s, t$ as claimed in Theorem~\ref{thm:bibfsExp}.

To prove Theorem~\ref{thm:bibfsExp}, we show that in any $(n,d,\lambda)$-graph $G$, it holds for every node $s \in G$ that most other nodes have a small distance to $s$. Concretely, we show the following
\begin{lemma}
\label{lem:mostclose}
    If $G$ is an $(n,d,\lambda)$-graph, then for every node $s \in G$, it holds for every $0 < \delta < 1$ that there are no more than $\delta n$ nodes with distance more than $(1/2)\lg_{d/\lambda}(n/\delta)$ from $s$.
\end{lemma}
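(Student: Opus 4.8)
The plan is to translate the combinatorial notion of graph distance into a spectral statement about how the adjacency matrix spreads a point mass, and then use the spectral gap to argue that almost all of that mass reaches most vertices within $\approx (1/2)\lg_{d/\lambda}(n/\delta)$ steps. Concretely, I would track the vector $A^k \mathbf{1}_s$, where $\mathbf{1}_s$ is the indicator of $s$, and exploit the fact that a vertex $v$ at distance more than $k$ from $s$ admits no walk of length $k$ from $s$, so $(A^k\mathbf{1}_s)_v = 0$. The number of such ``far'' vertices can then be controlled by how much $A^k$ contracts the part of $\mathbf{1}_s$ orthogonal to the uniform distribution.

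First I would use that $G$ is $d$-regular, so the all-ones vector $\mathbf{1}$ is the top eigenvector of $A$ with eigenvalue $d$, and decompose $\mathbf{1}_s = \tfrac{1}{n}\mathbf{1} + x$ with $x \perp \mathbf{1}$ and $\|x\|_2^2 = 1 - 1/n \le 1$. Since $x$ lies in the span of the non-principal eigenvectors, whose eigenvalues are bounded by $\lambda$ in absolute value, the $(n,d,\lambda)$ hypothesis gives $\|A^k x\|_2 \le \lambda^k\|x\|_2 \le \lambda^k$. Applying $A^k$ yields $A^k\mathbf{1}_s = \tfrac{d^k}{n}\mathbf{1} + A^k x$, so for every vertex $v$ with $\dist(s,v) > k$ we get $0 = (A^k\mathbf{1}_s)_v = \tfrac{d^k}{n} + (A^k x)_v$, i.e.\ $(A^k x)_v = -d^k/n$. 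Letting $F = \{v : \dist(s,v) > k\}$ and summing these squared entries,
\[
    |F|\cdot \frac{d^{2k}}{n^2} \;=\; \sum_{v \in F}(A^k x)_v^2 \;\le\; \|A^k x\|_2^2 \;\le\; \lambda^{2k},
\]
which rearranges to $|F| \le n^2 (\lambda/d)^{2k}$. Choosing $k$ to be (an integer close to) $(1/2)\lg_{d/\lambda}(n/\delta)$ makes the right-hand side equal to $\delta n$, giving the claimed bound.

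The conceptual heart of the argument --- and the step I would regard as the ``obstacle'' --- is the observation that large distance forces $(A^k x)_v$ to equal the comparatively large value $-d^k/n$ on every far vertex simultaneously; this is what lets a single $\ell_2$ contraction bound translate into a cardinality bound on $F$. Everything else is routine: the eigenvalue contraction is just the definition of an $(n,d,\lambda)$-graph applied to a vector orthogonal to $\mathbf{1}$. The only minor bookkeeping is that $(1/2)\lg_{d/\lambda}(n/\delta)$ need not be an integer, so I would take $k$ to be the appropriate integer (the threshold is used only for integer-radius BFS balls in Theorem~\ref{thm:bibfsExp}, where it is rounded anyway), at the cost of at most a constant factor that does not affect the applications. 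As a sanity check, one could alternatively derive a comparable bound from the expander mixing lemma applied to the edge-free pair $\bigl(B_{k-1}(s),\,F\bigr)$, but that route additionally requires a lower bound on ball growth, so the direct spectral computation above is the cleaner path.
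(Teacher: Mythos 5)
Your proof is correct and is essentially the paper's own argument: both rest on decomposing an indicator vector into its component along $\mathbf{1}$ plus an orthogonal part, the contraction $\|A^k x\|_2 \le \lambda^k \|x\|_2$ on that orthogonal part, and the vanishing of $(A^k)_{s,v}$ for vertices $v$ at distance greater than $k$, arriving at the identical bound $|F| \le n^2(\lambda/d)^{2k}$ and the same choice of $k$. The only difference is cosmetic and amounts to a transposition: the paper decomposes the indicator of the far set $Z$ and pairs it against $e_s$ (obtaining $|Z| \le (\lambda/d)^k n \sqrt{|Z|}$), whereas you decompose $\mathbf{1}_s$ and sum squared entries of $A^k x$ over the far set.
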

Theorem~\ref{thm:bibfsExp} now follows from Lemma~\ref{lem:mostclose} by observing that for a pair of nodes $s,t$ of distance $k$ in an $(n,d,\lambda)$-graph, the bidirectional searches will meet after expanding for $\lceil k/2 \rceil$ steps from $s$ and $t$. Since each node explored during breadth first search has at most $d-1$ neighbors outside the previously explored tree, it follows that the total number of nodes visited is $O((d-1)^{\lceil k/2 \rceil})$. Since it holds for every $s \in G$ that $\dist(s,t) \leq (1/2)\lg_{d/\lambda}(n/\delta)$ for a $1-\delta$ fraction of all other nodes $t$, the conclusion follows.

Corollary~\ref{cor:biBFS} follows from Theorem~\ref{thm:bibfsExp} by observing that for $\lambda = O(\sqrt{d})$, we have $(1/4) \lg_{d/\lambda}(n/\delta) =  (1/2)\lg_{\Omega(d)}(n/\delta)$. Noting that $\lg_{\Omega(d)}(n/\delta) = \ln(n/\delta)/(\ln(d) - O(1)) = (1+O(1/\ln d)) \lg_{d-1}(n/\delta)$, the conclusion follows.

What remains is to prove Lemma~\ref{lem:mostclose}. While the contents of the lemma is implicit in previous works, we have not been able to find a reference explicitly stating this fact. We thus provide a simple self-contained proof building on Chung's~\cite{Chung1989DiametersAE} proof that the diameter of an $(n,d,\lambda)$-graph is bounded by $\lceil \lg_{d/\lambda} n \rceil$. 

\begin{proof}[Proof of Lemma~\ref{lem:mostclose}]
Let $A$ be the adjacency matrix of an $(n,d,\lambda)$-graph $G$. Letting $d=\lambda_1 \geq \lambda_2 \geq \cdots \geq \lambda_n$ denote the (real-valued) eigenvalues of the real symmetric matrix $A$, we may write $A$ in its spectral decomposition $A=U \Sigma U^T$ with $\lambda_1,\dots,\lambda_n$ being the diagonal entries of the diagonal matrix $\Sigma$. By definition, we have $\max\{\lambda_2, |\lambda_n|\} = \lambda$. 

Notice that $(A^k)_{s,t}$ gives the number of length-$k$ paths from node $s$ to node $t$ in $G$. Furthermore, we have $A^k = U \Sigma^k U^T$. Now let $s$ be an arbitrary node of $G$ and let $Z \subseteq [n]$ denote the subset of columns $t$ such that $(A^k)_{s,t} = 0$. The eigenvalues of $A^k$ are $\lambda_1^k,\dots,\lambda_n^k$ and the all-1's vector $\allOne$ is an eigenvector corresponding to $\lambda_1$. Let $\allOne_Z$ denote the indicator for the set $Z$, i.e. the coordinates of $\allOne_Z$ corresponding to $t \in Z$ are $1$ and the remaining coordinates are $0$. By definition of $Z$, we have that $e_s^T A^k \allOne_Z = 0$. At the same time, we may write $\allOne_Z = (|Z|/n)\allOne + \beta u$ where $u$ is a unit length vector orthogonal to $\allOne$ and $\beta = \sqrt{|Z|-|Z|^2/n}$. Hence
\begin{eqnarray*}
    0 &=& e^T_s A^k \allOne_Z \\
    &=& e^T_s A^k ((|Z|/n)\allOne + \beta u) \\
    &=& e^T_s \lambda_1^k (|Z|/n) \allOne + \beta e^T_s A^k u \\
    &\geq& d^k|Z|/n - \beta \cdot \|e_s\| \cdot \|A^k u\| \\
    &\geq& d^k |Z|/n - \beta \lambda^k.
\end{eqnarray*}
From this we conclude $|Z| \leq (\lambda/d)^k n \beta \leq (\lambda/d)^k n \sqrt{|Z|}$, implying $|Z| \leq (\lambda/d)^{2k} n^2$. For $k = (1/2)\lg_{d/\lambda}(n/\delta)$, this is $|Z| \leq \delta n$.
\end{proof}

For the special case of Ramanujan graphs, Theorem~\ref{thm:biRamanujan} claims an even stronger result than Theorem~\ref{thm:bibfsExp}. Recall that an $(n,d,\lambda)$-graph is Ramanujan if it satisfies that $\lambda \leq 2\sqrt{d-1}$. To prove Theorem~\ref{thm:biRamanujan} we make use of the following concentration result on distances in Ramanujan graphs:
\begin{theorem}[\cite{Lubetzky2015CutoffOA}]
\label{thm:ramanujan}
Let $G$ be a $d$-regular Ramanujan graph on $n$ nodes, where $d \geq 3$. Then for every node $s \in G$ it holds that
\[
\lvert\{ t \in G : \lvert\dist(s,t) - \lg_{d-1}n\rvert > 3 \lg_{d-1} \lg n \}\rvert = o(n).
\]
\end{theorem}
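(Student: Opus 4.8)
\subsection*{Proof proposal for Theorem~\ref{thm:ramanujan}}

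The plan is to control the two tails of $\dist(s,t)$ separately. For the \emph{lower} tail --- that few nodes are much closer to $s$ than $\lg_{d-1}n$ --- only $d$-regularity is needed. Each node reached for the first time by BFS contributes at most $d-1$ new neighbours, so the ball of radius $r$ obeys the tree bound $|\{t:\dist(s,t)\le r\}| \le 1 + d\sum_{j=0}^{r-1}(d-1)^j \le \tfrac{d}{d-2}(d-1)^r$ for $d\ge 3$. Choosing $r = \lg_{d-1}n - 3\lg_{d-1}\lg n$ gives $(d-1)^r = n/(\lg n)^3$, so this ball has $o(n)$ nodes; hence all but $o(n)$ nodes satisfy $\dist(s,t) \ge \lg_{d-1}n - 3\lg_{d-1}\lg n$.

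The \emph{upper} tail is the crux, and it is where the Ramanujan hypothesis enters. I would first point out why the spectral estimate behind Lemma~\ref{lem:mostclose} is too weak: bounding arbitrary walk counts by $\lambda^k = (2\sqrt{d-1})^k$ only places most nodes within $\lg_{d/\lambda}n$ of $s$, which for small $d$ is far larger than the target $\lg_{d-1}n$. The remedy is to count \emph{non-backtracking} walks. Let $A_k$ be the matrix whose $(u,v)$ entry is the number of non-backtracking walks of length $k$ from $u$ to $v$. These obey $A_0 = I$, $A_1 = A$, $A_2 = A^2 - dI$, and $A_{k+1} = A\,A_k - (d-1)A_{k-1}$ for $k\ge 2$, so $A_k = p_k(A)$ for an explicit degree-$k$ polynomial $p_k$. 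Thus $A_k$ is symmetric with eigenvalue $p_k(\lambda_i)$ on the $i$-th eigenvector of $A$; on the trivial eigenvector $\allOne$ one checks $p_k(d) = d(d-1)^{k-1}$, the total number of non-backtracking walks of length $k$ leaving $s$.

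The key estimate is that Ramanujan-ness forces every nontrivial $\lambda_i$ into the tree spectrum $[-2\sqrt{d-1},2\sqrt{d-1}]$. Writing $\lambda_i = 2\sqrt{d-1}\cos\theta_i$ and solving the recurrence identifies $p_k$ with Chebyshev polynomials of the second kind, namely $p_k(2\sqrt{d-1}\cos\theta) = (d-1)^{k/2}\bigl(U_k(\cos\theta) - \tfrac{1}{d-1}U_{k-2}(\cos\theta)\bigr)$; since $|U_k(\cos\theta)| = |\sin((k+1)\theta)/\sin\theta| \le k+1$, this yields $|p_k(\lambda_i)| \le 2(k+1)(d-1)^{k/2}$ for all $i\ge 2$. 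This is exactly where Ramanujan is indispensable: an eigenvalue exceeding $2\sqrt{d-1}$ would make the corresponding $\theta_i$ imaginary and $p_k(\lambda_i)$ grow like a power larger than $(d-1)^{k/2}$, wrecking the bound. I expect establishing this Chebyshev growth estimate to be the main obstacle; everything else is first/second-moment bookkeeping.

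To finish, fix $k = \lg_{d-1}n + 3\lg_{d-1}\lg n$ and set $V(t) = A_k(s,t) + A_{k-1}(s,t)$, counting non-backtracking walks of length $k$ or $k-1$ (using both lengths removes the parity obstruction when $G$ is bipartite); since any such walk certifies $\dist(s,t)\le k$, it suffices to bound $|\{t : V(t)=0\}|$. Writing $u_i(s)$ for the $s$-coordinate of the $i$-th eigenvector, $(A_k^2)_{s,s} = \sum_i p_k(\lambda_i)^2 u_i(s)^2 \le p_k(d)^2/n + (2(k+1)(d-1)^{k/2})^2$, so the variance of $V$ over $t$ is $O(k^2(d-1)^k/n)$ while its mean is $\Theta((d-1)^k/n)$. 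Chebyshev's inequality then gives $|\{t:V(t)=0\}| = O(k^2 n^2 (d-1)^{-k}) = O(k^2 n/(\lg n)^3) = o(n)$, since $k = O(\lg n)$. Combining the two tails shows that all but $o(n)$ nodes satisfy $|\dist(s,t) - \lg_{d-1}n| \le 3\lg_{d-1}\lg n$, as claimed.
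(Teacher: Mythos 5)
Your proposal is correct, but there is nothing in the paper to match it against: the paper does not prove Theorem~\ref{thm:ramanujan} at all --- it imports the statement verbatim from Lubetzky and Peres~\cite{Lubetzky2015CutoffOA} and uses it only as a black box to deduce Theorem~\ref{thm:biRamanujan}. What you have written is a self-contained reconstruction of the distance-concentration core of that cited result, and the steps check out. Your lower tail is the trivial volume bound in a $d$-regular graph. Your upper tail is the standard second-moment argument over non-backtracking walk counts: the three-term recurrence and the identity $p_k(2\sqrt{d-1}\cos\theta) = (d-1)^{k/2}\bigl(U_k(\cos\theta) - \tfrac{1}{d-1}U_{k-2}(\cos\theta)\bigr)$ are the classical facts about these (Geronimus-type) polynomials, giving $|p_k(\lambda_i)| \leq 2(k+1)(d-1)^{k/2}$ on the tree spectrum, which is exactly where the Ramanujan hypothesis is used; the trivial-eigenvalue contribution to $\tfrac{1}{n}\sum_t V(t)^2$ cancels exactly against the squared mean because $u_1(s)^2 = 1/n$, so your variance bound $O(k^2(d-1)^k/n)$ is right; and Chebyshev's inequality then yields $O(k^2 n^2 (d-1)^{-k}) = O(k^2 n/(\lg n)^3) = o(n)$ exceptional targets at $k = \lg_{d-1} n + 3\lg_{d-1}\lg n$, as claimed. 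Two minor remarks. First, under this paper's definition a Ramanujan graph cannot be bipartite (bipartiteness forces the eigenvalue $-d$, and $d > 2\sqrt{d-1}$ for $d \geq 3$), so your parity fix $V(t) = A_k(s,t) + A_{k-1}(s,t)$ is not strictly needed, though it is harmless and guards against near-bipartite eigenvalues close to $-2\sqrt{d-1}$. Second, what the citation buys beyond your argument is the near-Ramanujan variant the paper invokes immediately afterwards (only $o(n)$ nodes at distance exceeding $(1+o(1))\lg_{d-1} n$ when $\lambda \leq 2\sqrt{d-1} + o(1)$); your argument degrades gracefully there too, since just outside the tree spectrum $p_k$ exceeds $(d-1)^{k/2}$ only by a factor $n^{o(1)}$ at the relevant $k = O(\lg n)$, at the price of a slightly wider window. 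In short: a correct, genuinely self-contained proof, essentially the moment-method heart of the Lubetzky--Peres argument, standing in for what the paper handles by citation.
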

Using Theorem~\ref{thm:ramanujan}, we conclude that for every node $s \in G$, it holds for $(1-o(1))n$ choices of $t$ that $\dist(s,t) \leq \lg_{d-1} n + 3 \lg_{d-1} \lg n$. The middle node $v$ on a shortest path from $s$ to $t$ thus has distance at most $k = \lceil (\lg_{d-1} n + 3 \lg_{d-1} \lg n)/2 \rceil \leq (1/2)\lg_{d-1} n + (3/2) \lg_{d-1} \lg n + 1$ from $s$ and $t$. Since the nodes in a layer $\ell$ of a BFS tree in a $d$-regular graph $G$ has at most $d-1$ neighbors in layer $\ell+1$, we conclude that the two BFS trees $\cT_s$ and $\cT_t$ contain at most $O((d-1)^k) \leq O(\sqrt{n} \cdot \ln^{3/2}(n))$ nodes each upon termination. Note that the same 
proof shows how to find a shortest path in time $n^{1/2+o(1)}$ between most pairs of vertices $s$ and $t$ in near Ramanujan graphs, as it is also proved in \cite{Lubetzky2015CutoffOA} that in such graphs, for every node $s$ there are
only $o(n)$ nodes $t$ of distance exceeding $(1+o(1)) \lg_{d-1} n$ from $s$.

\subsection{Connecting Path}
In the following, we analyse our algorithm, BFS + Random Walks, for finding a short $s$-$t$ path in an $(n,d,\lambda)$-graph. The algorithm is parameterised by an integer $k \geq \sqrt{n}$ and is as follows: First, run BFS from $s$ until $k$ nodes have been discovered. Call the set of discovered nodes $V_s$. Next, run $\tau = k/(3 \lg_{d/\lambda}(n))$ random walks $\bp_1,\dots,\bp_\tau$ from $t$, with each random walk having a length of $3 \lg_{d/\lambda}(n)$. If any of the random walks intersects $V_s$, we have found an $s$-$t$ path of length $O(\lg_{d/\lambda}(n))$ as the paths $\bp_i$ have length $O(\lg_{d/\lambda}(n))$ and the diameter, and hence the depth of the BFS tree, in an $(n,d,\lambda)$-graph is at most $\lceil \lg_{d /\lambda}(n) \rceil$~\cite{Chung1989DiametersAE}.

To analyse the success probability of the algorithm, we bound the probability that all paths $\bp_i$ avoid $V_s$. For this, we use the following two results

\begin{theorem}[\cite{hoory06}]
\label{thm:fastmix}
Let $G$ be an $(n,d,\lambda)$-graph. For any two nodes $s,t$ in $G$, the probability $p^k_{s,t}$ that a random walk starting in $s$ and of length $k$ ends in the node $t$, satisfies $|1/n - p^k_{s,t}| \leq (\lambda/d)^k$.
\end{theorem}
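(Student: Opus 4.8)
The plan is to express the $k$-step walk probability as a single entry of a matrix power and then read it off from the spectral decomposition, exactly paralleling the manipulation already used in the proof of Lemma~\ref{lem:mostclose}. Since $G$ is $d$-regular, the walk transition matrix is $M = A/d$, which is symmetric because $A$ is symmetric. The probability of ending at $t$ after a length-$k$ walk started at $s$ is then exactly $p^k_{s,t} = e_s^T M^k e_t = (M^k)_{s,t}$. Writing $M = U D U^T$ with $D$ diagonal, its eigenvalues are $\mu_i = \lambda_i/d$, so $\mu_1 = 1$ with corresponding unit eigenvector $\allOne/\sqrt{n}$, while $|\mu_i| \le \lambda/d$ for every $i \ge 2$ by the definition of an $(n,d,\lambda)$-graph.

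Next I would isolate the stationary contribution. Letting $v_1,\dots,v_n$ be the orthonormal columns of $U$, we have $(M^k)_{s,t} = \sum_{i=1}^n \mu_i^k (v_i)_s (v_i)_t$. The $i=1$ term contributes $1 \cdot (1/\sqrt{n})(1/\sqrt{n}) = 1/n$, which is precisely the target value, so $p^k_{s,t} - 1/n = \sum_{i \ge 2} \mu_i^k (v_i)_s (v_i)_t$. Applying $|\mu_i| \le \lambda/d$ termwise and pulling the bound out of the sum yields $|p^k_{s,t} - 1/n| \le (\lambda/d)^k \sum_{i \ge 2} |(v_i)_s|\,|(v_i)_t|$.

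The one step that requires a little care---though it is not really a serious obstacle---is showing the residual sum $\sum_{i\ge 2}|(v_i)_s|\,|(v_i)_t|$ is at most $1$. By Cauchy--Schwarz it is bounded by $\sqrt{\sum_{i\ge 2}(v_i)_s^2}\cdot\sqrt{\sum_{i\ge 2}(v_i)_t^2}$, and since $U$ is orthogonal each of its rows is a unit vector, so $\sum_{i=1}^n (v_i)_s^2 = \|e_s\|^2 = 1$ and likewise for $t$; discarding the nonnegative $i=1$ terms only decreases each sum, so both factors are at most $1$. Combining everything gives $|1/n - p^k_{s,t}| \le (\lambda/d)^k$, as claimed. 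The whole argument is essentially the eigenvalue-and-orthogonality computation from Lemma~\ref{lem:mostclose}, now carried out for the normalized matrix $M$ rather than for $A$ itself.
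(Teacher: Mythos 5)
Your proof is correct. The paper itself gives no proof of Theorem~\ref{thm:fastmix} --- it is imported directly from \cite{hoory06} --- but your argument (pass to the normalized matrix $M = A/d$, isolate the $\allOne/\sqrt{n}$ eigencomponent contributing exactly $1/n$, and bound the remaining terms by $(\lambda/d)^k$ via Cauchy--Schwarz together with the fact that the rows of $U$ are unit vectors) is precisely the standard spectral proof of this mixing bound, and, as you observe, it parallels the paper's own proof of Lemma~\ref{lem:mostclose}.
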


\begin{theorem}[\cite{alonDerand}]
\label{thm:hardavoid}
Let $G$ be an $(n,d,\lambda)$-graph and let $W$ be a set of $w$ vertices in $G$ and set $\mu = w/n$. Let $P(W,k)$ be the total number of length $k$ paths ($k+1$ nodes) that stay in $W$. Then
\[
P(W,k) \leq wd^{k}(\mu + (\lambda/d) (1-\mu))^{k}.
\]
\end{theorem}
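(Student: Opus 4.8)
The plan is to recast the combinatorial count $P(W,k)$ as a quadratic form in a symmetric matrix and then control that matrix spectrally. Let $A$ be the adjacency matrix of $G$ and let $D_W$ be the diagonal $0$-$1$ matrix whose $(v,v)$ entry is $1$ exactly when $v \in W$, so that $D_W \allOne = \allOne_W$ with $\allOne_W$ the indicator vector of $W$. Setting $M = D_W A D_W$, the entry $(M^k)_{u,v}$ counts precisely those length-$k$ walks from $u$ to $v$ all of whose $k+1$ vertices lie in $W$, and it vanishes unless $u,v \in W$. Summing over all endpoints thus gives
\[
P(W,k) \;\le\; \allOne_W^T M^k \allOne_W,
\]
where I write $\le$ rather than $=$ only to cover the reading of ``path'' as a non-self-intersecting walk; for the walk interpretation this is an equality, and in either case an upper bound on the right-hand side suffices.

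It then suffices to show that every eigenvalue $\mu_i$ of the symmetric matrix $M$ satisfies $|\mu_i| \le \rho := \mu d + (1-\mu)\lambda = d(\mu + (\lambda/d)(1-\mu))$. Granting this, I would expand $\allOne_W$ in an orthonormal eigenbasis $v_1,\dots,v_n$ of $M$ with eigenvalues $\mu_1,\dots,\mu_n$ to obtain
\[
\allOne_W^T M^k \allOne_W = \sum_i \mu_i^k (v_i^T \allOne_W)^2 \le \rho^k \sum_i (v_i^T \allOne_W)^2 = \rho^k \|\allOne_W\|^2 = \rho^k w,
\]
which is exactly the claimed bound $w d^k(\mu + (\lambda/d)(1-\mu))^k$. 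The structural observation enabling the eigenvalue bound is that any eigenvector $x$ of $M$ with $\mu_i \neq 0$ is supported on $W$: since $Mx = D_W A D_W x$ lies in the span of $\{e_v : v \in W\}$, the vector $\mu_i x$ must too, forcing $x$ to vanish off $W$. Hence it is enough to bound $|x^T A x|$ over unit vectors $x$ supported on $W$, since for such $x$ we have $x^T M x = x^T A x = \mu_i$.

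Finally I would prove the quadratic-form bound using the spectral decomposition $A = U\Sigma U^T$ exactly as in the proof of Lemma~\ref{lem:mostclose}. Writing a unit $x$ supported on $W$ as $x = \alpha \allOne/\sqrt{n} + y$ with $y \perp \allOne$, orthogonality of the eigenspaces gives $x^T A x = d\alpha^2 + y^T A y$, and since every eigenvalue of $A$ other than $d$ has absolute value at most $\lambda$ we get $|y^T A y| \le \lambda\|y\|^2 = \lambda(1-\alpha^2)$. Because $x$ is supported on only $w$ coordinates, Cauchy--Schwarz yields $\alpha^2 = (\sum_{v\in W} x_v)^2/n \le w\|x\|^2/n = \mu$. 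Combining, $x^T A x \le d\alpha^2 + \lambda(1-\alpha^2) = \lambda + (d-\lambda)\alpha^2 \le \lambda + (d-\lambda)\mu = \rho$, using $d \ge \lambda$ for monotonicity in $\alpha^2$. The step I expect to be the real subtlety is that I need an absolute-value (spectral-radius) bound, not merely an upper bound on the form: since $M = D_W A D_W$ may have negative eigenvalues and $k$ may be odd, the factors $\mu_i^k$ can be negative, so I must also verify the lower bound $x^T A x \ge (d+\lambda)\alpha^2 - \lambda \ge -\lambda \ge -\rho$, where the final inequality uses $\rho \ge \lambda$ (valid as $\mu d \ge \mu\lambda$). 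This gives $|x^T A x| \le \rho$, completing the eigenvalue bound and hence the theorem.
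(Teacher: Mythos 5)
Your proof is correct. There is no in-paper proof to compare against: the paper imports Theorem~\ref{thm:hardavoid} from \cite{alonDerand} as a black box. Your argument --- writing $P(W,k)$ as the quadratic form $\allOne_W^T (D_W A D_W)^k \allOne_W$ and bounding the spectral radius of $D_W A D_W$ by $\mu d + (1-\mu)\lambda$ via the decomposition $x = \alpha \allOne/\sqrt{n} + y$ together with the Cauchy--Schwarz bound $\alpha^2 \le \mu$ --- is essentially the classical proof from that cited reference, and you correctly identified and handled the one real subtlety, namely that odd $k$ and possibly negative eigenvalues of $D_W A D_W$ force a two-sided spectral-radius bound (your check $\rho \ge \lambda$, so $x^T A x \ge -\lambda \ge -\rho$) rather than a one-sided Rayleigh bound.
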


Now consider one of the length $3 \lg_{d/\lambda}(n)$ random walks $\bp = \bp_i$ starting in $t$. To show that it is likely that the path intersects $V_s$, we split the random walk $\bp = (t, \bv_1, \dots,\bv_{3\lg_{d/\lambda}(n) + 1})$ into two parts, namely the first $2\lg_{d/\lambda}(n)$ steps $\bp^{(1)} = (t,\bv_1,\dots,\bv_{2\lg_{d/\lambda}(n)+1})$ and the remaining $\lg_{d/\lambda}(n)$ steps $\bp^{(2)} = (\bv_{2\lg_{d/\lambda}(n)+1},\dots,\bv_{3\lg_{d/\lambda}(n)+1})$. Note that we let the last node $e(\bp^{(1)}) = \bv_{2\lg_{d/\lambda}(n)+1}$ in $\bp^{(1)}$ equal the first node $s(\bp^{(2)}) = \bv_{2\lg_{d/\lambda}(n)+1}$ in $\bp^{(2)}$. We use $\bp^{(1)}$ to argue that $\bp^{(2)}$ has a near-uniform random starting node. We then argue that $\bp^{(2)}$ intersects $V_s$ with good probability.

By Theorem~\ref{thm:fastmix}, it holds for any node $r \in G$ that $\Pr[e(\bp^{(1)}) = r] \leq 1/n + 1/n^2$. Next, conditioned on $e(\bp^{(1)})=r$, the path $\bp^{(2)}$ is uniform random among the $d^{\lg_{d/\lambda}(n)}$ length $\lg_{d/\lambda}(n)$ paths starting in $r$. It follows that for any fixed path $p$ of length $\lg_{d/\lambda}(n)$ in $G$, we have $\Pr[\bp^{(2)} = p] \leq \Pr[e(\bp^{(1)}) = s(p)] d^{-\lg_{d/\lambda}(n)} \leq (1/n + 1/n^2)d^{-\lg_{d/\lambda}(n)}$. Now by Theorem~\ref{thm:hardavoid} with $W=V(G) \setminus V_s$ and assuming $\lambda \leq d/2$, there are at most $n d^{\lg_{d/\lambda}(n)}((1-k/n) + (\lambda/d)(k/n))^{\lg_{d/\lambda}(n)} \leq n d^{\lg_{d/\lambda}(n)} (1-k/(2n))^{\lg_{d/\lambda}(n)} \leq n d^{\lg_{d/\lambda}(n)} \exp(-\lg_{d/\lambda}(n)k/(2n))$ paths in $G$ that stay within $V(G) \setminus V_s$. A union bound over all of them implies that the probability that $\bp^{(2)}$ avoids $V_s$ is at most
\[
(1/n+1/n^2)d^{-\lg_{d/\lambda}(n)} n d^{\lg_{d/\lambda}(n)} \exp(-\lg_{d/\lambda}(n)k/(2 n)) \leq \exp(-\lg_{d/\lambda}(n)k/(2n) + 1/n).
\]
Since the $\tau = k/(3 \lg_{d/\lambda}(n))$ random walks $\bp_1,\dots,\bp_\tau$ are independent, we conclude that the probability they all avoid $V_s$ is no more than
\[
\exp(-k^2/(6n) + k/(3 \lg_{d/\lambda}(n)n)).
\]
Letting $k = \sqrt{7 n \ln(1/\delta)}$ and assuming $n$ is at least some sufficiently large constant, we have that at least one path $\bp_i$ intersects $V_s$ with probability at least $1-\delta$. This completes the proof of Theorem~\ref{thm:path}.

\section{Lower Bounds}
\label{sec:lower}
In this section, we prove lower bounds on the  number of queries made by any algorithm for computing an $s$-$t$ path in a random graph. Our query model allows \emph{node-incidence queries}. Here the $n$ nodes of a graph $G$ are assumed to be labeled by the integers $[n]$. A node-incidence query is specified by a node index $i \in [n]$, and the query algorithm is returned the list of edges $(i,j)$ incident to $i$.

We start by considering an Erd\H{o}s-R\'{e}nyi random graph, as it is the simplest to analyse. We then proceed to random $d$-regular graphs. For the lower bounds, the task is to output a path between nodes $s=1$ and $t=n$. An algorithm for finding an $s$-$t$ path works as follows: In each step, the algorithm is allowed to ask one node-incidence query. We make no assumption about how the algorithm determines which query to make in each step, other than it being computable from all edges seen so far (the responses to the node-incidence queries). For randomized algorithms, the choice of query in each step is chosen randomly from a distribution over queries computable from all edges seen so far.

\subsection{Erd\H{o}s-R\'{e}nyi} 
Let $\bG$ be an Erd\H{o}s-R\'{e}nyi random graph, where each edge is present independently with probability $p \geq 1.5 \ln(n)/n$ and let $\Alg^\star$ be a possibly randomized algorithm for computing an $s$-$t$ path in $\bG$ when $s=1$ and $t=n$. Let $\alpha^\star$ be the probability that $\Alg^\star$ outputs a \emph{valid} $s$-$t$ path (all edges on the reported path are in $\bG$) and let $q$ be the worst case number of queries made by $\Alg^\star$ (for $\Alg^\star$ making an expected $q$ queries, we can always make it worst case $O(q)$ queries by decreasing $\alpha$ by a small additive constant). Here the probability is over both the random choices of $\Alg^\star$ and the random input graph $\bG$. By linearity of expectation, we may fix the random choices of $\Alg^\star$ to obtain a deterministic algorithm $\Alg$ that outputs a valid $s$-$t$ path with probability $\alpha \geq \alpha^\star$. It thus suffices to prove an upper bound on $\alpha$ for such deterministic $\Alg$.

For a graph $G$, let $\pi(G)$ denote the \emph{trace} of running the deterministic $\Alg$ on $G$. If $i_1(G),\dots,i_q(G)$ denotes the sequence of queries made by $\Alg$ on $G$ and $\cN_1(G),\dots,\cN_q(G)$ denotes the returned sets of edges, then 
\[
\pi(G) := (i_1(G), \cN_1(G), i_2(G), \dots, i_q(G), \cN_q(G)).
\]
Observe that if we condition on a particular trace $\tau=(i_1, N_1, i_2, \dots, i_q, N_q)$, then the distribution of $\bG$ conditioned on $\pi(\Alg,\bG)=\tau$ is the same as if we condition on the set of edges incident to $i_1,\dots,i_q$ being precisely $N_1,\dots,N_q$. This is because the algorithm $\Alg$ is deterministic and the execution of $\Alg$ is the same for all graphs $G$ with the same such sets of edges incident to  $i_1,\dots,i_q$. Furthermore, no graph $G$ with a different set of incident edges for $i_1,\dots,i_q$ will result in the trace $\tau$.

For a trace $\tau = (i_1,N_1,\dots,i_q,N_q)$, call the trace \emph{connected} if there is a path from $s$ to $t$ using the \emph{discovered} edges
\[
\bigcup_{j=1}^q N_j.
\]
Otherwise, call it \emph{disconnected}. Intuitively, if a trace is disconnected, then it is unlikely that $\Alg$ will succeed in outputting a valid path connecting $s$ and $t$ as it has to guess some of the edges along such a path. Furthermore, if $\Alg$ makes too few queries, then it is unlikely that the trace is connected. Letting $\Alg(G)$ denote the output of $\Alg$ on the graph $G$, we have for a random graph $\bG$ that
\[
\alpha = \Pr[\Alg(\bG)\textrm{ is valid}] \leq \Pr[\pi(\bG)\textrm{ is connected}] + \Pr[\Alg(\bG)\textrm{ is valid}\mid \pi(\bG)\textrm{ is disconnected}].
\]
We now bound the two quantities on the right hand side separately.

The simplest term to bound is 
\[
\Pr[\Alg(\bG)\textrm{ is valid}\mid \pi(\Alg,\bG)\textrm{ is disconnected}].
\]
For this, let $\tau = (i_1,N_1,\dots,i_q,N_q)$ be an arbitrary disconnected trace in the support of $\pi(\bG)$ when $\bG$ is an Erd\H{o}s-R\'{e}nyi random graph, where each edge is present with probability $p \geq 1.5 \ln(n)/n$. Observe that the output of $\Alg$ is determined from $\tau$. Since $\tau$ is disconnected, the path reported by $\Alg$ on $\tau$ must contain at least one edge $(u,v)$ where neither $u$ nor $v$ is among $\cup_{j} \{i_j\}$ or otherwise the output path is valid with probability $0$ conditioned on $\tau$. But conditioned on the trace $\tau$, every edge that is not connected to $\{i_1,\dots,i_q\}$ is present independently with probability $p$. We thus conclude 
\[
\Pr[\Alg(\bG)\textrm{ is valid}\mid \pi(\bG)=\tau] \leq p.
\]
Since this holds for every disconnected $\tau$, we conclude
\[
\Pr[\Alg(\bG)\textrm{ is valid}\mid \pi(\bG)\textrm{ is disconnected}] \leq p.
\]

Next we bound the probability that $\pi(\bG)$ is connected. For this, define for $1\leq k\leq q$
\[
\pi_k(G) := (i_1(G),\cN_1(G),i_2(G),\dots,i_k(G),\cN_k(G))
\]
as the trace of $\Alg$ on $G$ after the first $k$ queries. As for $\pi(G)$, we say that $\pi_k(G)$ is connected if there is a path from $s$ to $t$ using the discovered edges 
\[
E(\pi_k(G)) = \bigcup_{j=1}^k \cN_j(G)
\]
and that it is disconnected otherwise. We further say that $\pi_k(G)$ is \emph{useless} if it is both disconnected and $|E(\pi_k(G))| \leq 2pnk$. Since 
\[
\Pr[\pi_k(\bG) \textrm{ is disconnected}] \geq \Pr[\pi_k(\bG) \textrm{ is useless}]
\]
we focus on proving that $\Pr[\pi_k(\bG) \textrm{ is useless}]$ is large. For this, we lower bound
\[
\Pr[\pi_k(\bG) \textrm{ is useless} \mid \pi_{k-1}(\bG)\textrm{ is useless}].
\]
Note that the base case $\pi_0(\bG)$ is defined to be useless as $s$ and $t$ are not connected when no queries have been asked and also $|E(\pi_0(G))|=0 \leq 2pn 0 = 0$.
Let $\tau_{k-1} = (i_1,N_1,\dots,i_{k-1},N_{k-1})$ be any useless trace. The query $i_k = i_k(\bG)$ is uniquely determined when conditioning on $\pi_{k-1}(\bG) = \tau_{k-1}$ and so is the edge set $E_{k-1}=E(\pi_{k-1}(\bG))$. Furthermore, we know that $|E_{k-1}| \leq 2pn(k-1)$. We now bound the probability that the query $i_k$ discovers more than $2pn$ new edges. If $i_k$ has already been queried, no new edges are discovered and the probability is $0$. So assume $i_k \notin \{i_1,\dots,i_{k-1}\}$. Now observe that conditioned on $\pi_{k-1}(\bG)=\tau_{k-1}$, the edges $(i_k,i)$ where $i \notin \{i_1,\dots,i_{k-1}\}$ are independently included in $\bG$ with probability $p$ each. The number of new edges discovered is thus a sum of $m \leq n$ independent Bernoullis $\bX_1,\dots,\bX_m$ with success probability $p$. A Chernoff bound implies $\Pr[\sum_i \bX_i > (1+\delta)\mu] < (e^{\delta}/(1+\delta)^{1+\delta})^\mu$ for any $\mu \geq mp$ and any $\delta>0$. Letting $\mu = np$ and $\delta = 1$ gives
\[
\Pr[\sum_i \bX_i > 2np] < (e/4)^{np} < e^{-np/3}.
\]
Since we assume $p > 1.5 \ln (n)/n$, this is at most $1/\sqrt{n}$. 

We next bound the probability that the discovered edges $\cN_k(\bG)$ makes $s$ and $t$ connected in $E(\pi_k(\bG))$. For this, let $V_s$ denote the nodes in the connected component of $s$ in the subgraph induced by the edges $E_{k-1}$. Define $V_t$ similarly. We split the analysis into three cases. First, if $i_k \in V_s$, then $\cN_k(\bG)$ connects $s$ and $t$ if and only if one of the edges $\{i_k\} \times V_t$ is in $\bG$. Conditioned on $\pi_{k-1}(\bG) = \tau_{k-1}$, each such edge is in $\bG$ independently either with probability $0$, or with probability $p$ (depending on whether one of the end points is in $\{i_1,\dots,i_{k-1}\}$). A union bound implies that $s$ and $t$ are connected in $E(\pi_k(\bG))$ with probability at most $p|V_t|$. A symmetric argument upper bounds the probability by $p|V_s|$ in case $i_k \in V_t$. Finally, if $i_k$ is in neither of $V_s$ and $V_t$, it must have an edge to both a node in $V_s$ and in $V_t$ to connect $s$ and $t$. By independence, this happens with probability at most $p^2|V_t| |V_s|$. We thus conclude that 
\[
\Pr[\pi_k(\bG) \textrm{ is connected} \mid \pi_{k-1}(\bG) = \tau_{k-1}] \leq p \max\{|V_s|,|V_t|\} \leq p (|E_{k-1}|+1) \leq 2p^2 n k.
\]
A union bound implies
\[
\Pr[\pi_k(\bG)\textrm{ is useless} \mid \pi_{k-1}(\bG) \textrm{ is useless}] \geq 1-2p^2 nk - 1/\sqrt{n}.
\]
This finally implies
\begin{eqnarray*}
\Pr[\pi(\bG)\textrm{ is useless}] &=& \prod_{k=1}^q \Pr[\pi_k(\bG) \textrm{ is useless} \mid \pi_{k-1}(\bG) \textrm{ is useless}] \\
&\geq& 
\prod_{k=1}^q \left(1 - 2p^2 n k - 1/\sqrt{n}\right) \\
&\geq& 1-\sum_{k=1}^q (2p^2 n k + 1/\sqrt{n}) \\
&\geq& 1-p^2 n (q+1)^2 - q/\sqrt{n}.
\end{eqnarray*}
It follows that
\[
\Pr[\pi(\bG) \textrm{ is connected}]  = 1-\Pr[\pi(\bG) \textrm{ is disconnected}] \leq 1-\Pr[\pi(\bG) \textrm{ is useless}] \leq p^2 n (q+1)^2 + q/\sqrt{n}.
\]
For $q = o(1/(p\sqrt{n}))$ and $p \geq 1.5 \ln(n)/n$, this is $o(1)$. Note that for the lower bound to be meaningful, we need $p = O(1/\sqrt{n})$ as otherwise the bound on $q$ is less than $1$. (Indeed, 
for $p=\Omega(1/\sqrt n)$, $s$ and $t$ have a common neighbor with probability bounded away from $0$ and if so $2$ queries suffice). 
This concludes the proof of Theorem~\ref{thm:lbErdos}.

\subsection{$d$-Regular Graphs}
We now proceed to random $d$-regular graphs. Assume $dn$ is even, as otherwise a $d$-regular graph on $n$ nodes does not exist. Similarly to our proof for the Erd\H{o}s-R\'{e}nyi random graphs, we will condition on a trace of $\Alg$. Unfortunately, the resulting conditional distribution of a random $d$-regular graph is more cumbersome to analyse. We thus start by reducing to a slightly different problem. 

Let $\cM_{n,d}$ denote the set of all graphs on $nd$ nodes where the edges form a perfect matching on the nodes. There are thus $nd/2$ edges in any such graph. We think of the nodes of a graph $G \in \cM_{n,d}$ as partitioned into $n$ groups of $d$ nodes each, and we index the nodes by integer pairs $(i,j)$ with $i \in [n]$ and $j \in [d]$. Here $i$ denotes the index of the group. For a graph $G \in \cM_{n,d}$ and a sequence of group indices $p:= s,i_1,\dots,i_m,t$, we say that $p$ is a valid $s$-$t$ \emph{meta-path} in $G$, if for every two consecutive indices $a,b$ in $p$, there is at least one edge $((a,j_1), (b,j_2))$ in $G$. A meta-path is thus a valid path if and only if $s$ and $t$ are connected in the graph resulting from contracting the nodes in each group.

Now consider the problem of finding a valid $s$-$t$ meta-path in a graph $\bG$ drawn uniformly from $\cM_{n,d}$ (we write $\bG \sim \cM_{n,d}$ to denote such a graph) while asking \emph{group-incidence queries}. A group-incidence query is specified by a group index $i \in [n]$ and the answer to the query is the set of edges incident to the nodes $\{i\} \times \{1,\dots,d\}$.

We start by showing that an algorithm $\Alg^\star$ for finding an $s$-$t$ path in a random $d$-regular $n$-node graph, gives an algorithm $\Alg$ for finding an $s$-$t$ meta-path in a random $\bG \sim \cM_{n,d}$ using group-incidence queries.

\begin{lemma}
  \label{lem:matching}
  If there is a (possibly randomized) algorithm $\Alg^\star$ that reports a valid $s$-$t$ path with probability $\alpha$ in a random $d$-regular graph on $n$ nodes while making $q$ node-incidence queries, then there is a deterministic algorithm $\Alg$ that reports a valid $s$-$t$ meta-path with probability at least $\exp(-O(d^2)) \alpha$ in a random graph $\bG \sim \cM_{n,d}$ while making $q$ group-incidence queries.
\end{lemma}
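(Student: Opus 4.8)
The plan is to realize a uniform random $d$-regular graph as the projection of a uniform random matching $\bG \sim \cM_{n,d}$ via the classical configuration (pairing) model, and then to let $\Alg$ simply \emph{simulate} $\Alg^\star$ on this projection. First I would derandomize $\Alg^\star$: averaging over its random coins, some fixing of the coins yields a deterministic algorithm (still called $\Alg^\star$) that reports a valid $s$-$t$ path with probability at least $\alpha$ over a uniform random $d$-regular graph while making $q$ node-incidence queries. It then suffices to construct the deterministic $\Alg$ from this deterministic $\Alg^\star$.

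Next I would set up the projection. For $\bG \sim \cM_{n,d}$, let $H(\bG)$ be the multigraph on vertex set $[n]$ obtained by contracting each group $\{i\}\times[d]$ to a single vertex $i$, so a matching edge between groups $a$ and $b$ becomes the edge $\{a,b\}$, and a within-group matching edge becomes a loop. Call $\bG$ \emph{simple} if $H(\bG)$ has neither loops nor parallel edges, so that $H(\bG)$ is a simple $d$-regular graph on $[n]$. Two standard properties of the pairing model drive the reduction: (i) every simple $d$-regular graph on $[n]$ is the projection of exactly $(d!)^n$ matchings in $\cM_{n,d}$ (independently permute the $d$ half-edges of each group), so conditioned on being simple, $H(\bG)$ is \emph{uniform} over simple $d$-regular graphs; and (ii) $\Pr[\bG \text{ simple}] = \exp(-O(d^2))$, which is bounded away from $0$ for constant $d$.

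I would then define $\Alg$ on input $\bG \sim \cM_{n,d}$ to run $\Alg^\star$ against the (unrevealed) graph $H(\bG)$: whenever $\Alg^\star$ queries vertex $i$, $\Alg$ issues the group-incidence query for group $i$, contracts the returned matching edges to the set of neighboring groups, and hands this set to $\Alg^\star$ as the neighbors of $i$; $\Alg$ finally outputs the group sequence that $\Alg^\star$ outputs. This makes exactly one group-incidence query per node-incidence query, for a total of $q$. Conditioning on the event that $\bG$ is simple, the contracted answers $\Alg$ supplies are precisely the node-incidence answers $\Alg^\star$ would receive on the simple $d$-regular graph $H(\bG)$, so by (i) the simulation runs faithfully on a uniform random $d$-regular graph. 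Hence $\Alg^\star$ outputs a valid $s$-$t$ path in $H(\bG)$ with probability at least $\alpha$; each edge of such a path $s,i_1,\dots,i_m,t$ is witnessed by at least one matching edge between consecutive groups, so it is a valid $s$-$t$ meta-path in $\bG$. With (ii) this gives
\[
\Pr[\Alg \text{ outputs a valid meta-path}] \;\geq\; \Pr[\bG \text{ simple}]\cdot\alpha \;=\; \exp(-O(d^2))\,\alpha .
\]

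The hard part will be the two configuration-model facts (i) and (ii), and in particular making the simulation genuinely faithful. The delicate point is that a group-incidence query returns strictly more information than a node-incidence query (namely which half-edge of group $i$ each edge uses), so I must check that discarding this extra information and passing only the contracted neighbor set does not alter $\Alg^\star$'s behavior; this is fine because $\Alg^\star$ is deterministic and, on a simple $d$-regular graph, sees exactly the neighbor set. The measure-preserving property (i) ensures the conditional distribution of $H(\bG)$ matches the uniform model on which $\alpha$ is defined, and the constant-probability estimate (ii) is a standard Poisson-approximation computation for the number of loops and parallel edges in the pairing model. Neither $s=1$ nor $t=n$ requires special treatment, since a path in $H(\bG)$ is a meta-path in $\bG$ by construction.
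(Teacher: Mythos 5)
Your proposal is correct and follows essentially the same route as the paper: derandomize $\Alg^\star$, simulate it on the contraction of $\bG \sim \cM_{n,d}$ via group-incidence queries, and use the standard configuration-model facts that the contracted graph, conditioned on being simple, is uniform over $d$-regular graphs and is simple with probability $\exp(-O(d^2))$. The only cosmetic difference is that the paper's algorithm explicitly aborts when a contraction reveals a self-loop or duplicate edge, whereas you let the simulation run on regardless; this is immaterial since both analyses only credit the success probability on the event that $\bG$ is simple.
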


\begin{proof}
  Given an algorithm $\Alg^\star$ that reports a valid $s$-$t$ path in a random $d$-regular graph on $n$ nodes with probability $\alpha$, we start by fixing its randomness to obtain a deterministic algorithm $\Alg'$ with the same number of queries that outputs a valid $s$-$t$ path with probability at least $\alpha$. Next, let $\bG \sim \cM_{n,d}$. Let $i_1 \in [n]$ be the first node that $\Alg'$ queries (which is independent of the input graph). Our claimed algorithm $\Alg$ for reporting an $s$-$t$ meta-path in $\bG$ starts by querying the group $i_1$. Upon being returned the set of edges $\{((i_1,1), (j_1,k_1)), \dots, ((i_1,d),(j_d,k_d))\}$ incident to $\{i_1\} \times \{1,\dots,d\}$, we \emph{contract} the groups such that each edge $((i_1,h), (j,k))$ is replaced by $(i_1,j)$. If this creates any duplicate edges or self-edges, $\Alg$ aborts and outputs an arbitrarily chosen $s$-$t$ meta-path. Otherwise, the resulting set of edges $\{(i_1,j_1),\dots,(i_1,j_d)\}$ is passed on to $\Alg'$ as the response to the first query $i_1$. The next query $i_2$ of $\Alg'$ is then determined and we again ask it as a group-incidence query on $\bG$ and proceed by contracting groups in the returned set of edges and passing the result to $\Alg'$ if there are no duplicate or self-edges. Finally, if we succeed in processing all $q$ queries of $\Alg'$ without encountering duplicate or self-edges, $\Alg$ outputs the $s$-$t$ path reported by $\Alg'$ as the $s$-$t$ meta-path.

  To see that this strategy has the claimed probability of reporting a valid $s$-$t$ meta-path, let $\bG^\star$ be the graph obtained from $\bG$ by contracting \emph{all} groups. Observe that if we condition on $\bG^\star$ being a simple graph (no duplicate edges or self-edges), then the conditional distribution of $\bG^\star$ is precisely that of a random $d$-regular graph on $n$ nodes. It is well-known~\cite{Bender1974TheAN,Bollobs1980APP, wormald_1980, wormald_1999} that the contracted graph $\bG^\star$ is indeed simple with probability at least $\exp(-O(d^2))$ and the claim follows.
\end{proof}

In light of Lemma~\ref{lem:matching}, we thus set out to prove lower bounds for deterministic algorithms that report an $s$-$t$ meta-path in a random $\bG \sim \cM_{n,d}$ using group-incidence queries.

Let $\Alg$ be a deterministic algorithm making $q$ group-incidence queries that reports a valid $s$-$t$ meta-path with probability $\alpha$ in a random $\bG \sim \cM_{n,d}$. Similarly to our proof for Erd\H{o}s-R\'{e}nyi graphs, we start by defining the trace of $\Alg$ on a graph $G \in \cM_{n,d}$. If $i_1(G),\dots,i_q(G) \in [n]$ denotes the sequence of group-incidence queries made by $\Alg$ on $G$ and $\cN_1(G),\dots,\cN_q(G)$ denotes the returned sets of edges, then for $1 \leq k \leq q$, we define
\[
  \pi_k(G) = (i_1(G),\cN_1(G),\dots,i_k(G),\cN_k(G)).
\]
We also let $\pi(G) := \pi_q(G)$ denote the full trace. Call a trace $\tau_k = (i_1,N_1,\dots,i_k,N_k)$ connected if there is a sequence of group indices $p:=s,i_1,\dots,i_m,t$ such that for every two consecutive indices $a,b$ in $p$, there is an edge $((a,h), (b,k))$ in $\cup_i N_i$. Otherwise, call the trace disconnected. Letting $\Alg(G)$ denote the output of $\Alg$ on the graph $G$, we have
\[
  \alpha = \Pr[\Alg(\bG)\textrm{ is valid}] \leq \Pr[\pi(\bG) \textrm{ is connected}] + \Pr[\Alg(\bG) \textrm{ is valid} \mid \pi(\bG) \textrm{ is disconnected}].
\]
We bound the two terms separately, starting with the latter. So let $\tau= (i_1,N_1,\dots,i_q,N_q)$ be a disconnected trace in the support of $\pi(\bG)$. The output meta-path $\Alg(\bG) = p = s,i_1,\dots,i_m,t$ of $\Alg$ is determined from $\tau$. Since $\tau$ is disconnected, there must be a pair of consecutive indices $a,b$ in $p$ such that there is no edge $((a,h), (b,k)) \in \cup_i N_i$. Fix such a pair $a,b$. We now consider two cases. First, if either $a$ or $b$ is among $i_1,\dots,i_q$, then all edges incident to that group are among $\cup_i N_i$ conditioned on $\pi(\bG)=\tau$. It thus follows that $p$ is a valid $s$-$t$ meta-path with probability $0$ conditioned on $\pi(\bG)=\tau$. Otherwise, neither of $a$ and $b$ are among $i_1,\dots,i_q$. The set of edges $\cup_i N_i$ specify at most $dq$ edges of the matching $\bG$. For any node whose matching edge is not specified by $\cup_i N_i$, the conditional distribution of its neighbor is uniform random among all other nodes whose matching edge is not in $\cup_i N_i$. For each of the $d^2$ possible edges $((a,h),(b,k))$ between the groups $a$ and $b$, there is thus a probability at most $1/(nd -1 - 2dq)$ that the edge is in $\bG$ conditioned on $\pi(\bG)=\tau$. A union bound over all $d^2$ such edges finally implies
\[
  \Pr[\Alg(\bG) \textrm{ is valid} \mid \pi(\bG)=\tau] \leq \frac{d^2}{nd-1-2dq}.
\]
Since this holds for every disconnected $\tau$, we conclude
\[
  \Pr[\Alg(\bG) \textrm{ is valid} \mid \pi(\bG) \textrm{ is disconnected}] \leq \frac{d^2}{nd-1-2dq}.
\]

Next, to bound $\Pr[\pi(\bG)\textrm{ is connected}]$, we show that
\[
\Pr[\pi_k(\bG) \textrm{ is disconnected} \mid \pi_{k-1}(\bG) \textrm{ is disconnected}]
\]
is large. So let $\tau_{k-1} = (i_1,N_1,\dots,i_{k-1}, N_{k-1})$ be a disconnected trace in the support of $\pi_{k-1}(\bG)$. The next query $i_k = i_k(\bG)$ of $\Alg$ is fixed conditioned on $\pi_{k-1}(\bG) = \tau_{k-1}$. We have a two cases. First, if $i_k \in \{i_1,\dots,i_{k-1}\}$ then no new edges are returned by the query and we conclude 
\[
\Pr[\pi_k(\bG) \textrm{ is disconnected} \mid \pi_{k-1}(\bG)=\tau_{k-1}] = 1.
\]
Otherwise, let $V_s$ denote the subset of group-indices $j$ for which there is a meta-path from $s$ to $j$. Similarly, let $V_t$ denote the subset of group-indices $j$ for which there is a meta-path from $t$ to $j$. We have $V_s \cap V_t = \emptyset$. Now if $i_k \in V_s$, we have that $\pi_k(\bG)$ is connected only if there is an edge between a node $(i_k, j)$ with $j \in [d]$ and a node $(b, k)$ with $b \in V_t$. Let $r \in \{0,\dots,d\}$ denote the number of nodes $(i_k,j)$ with $j \in [d]$ for which the corresponding matching edge is not in $\cup_i N_i$. Conditioned on $\pi_{k-1}(\bG) = \tau_{k-1}$, the neighbor of any such node is uniform random among all other nodes for which the corresponding matching edge is not in $\cup_i N_i$. There are at least $nd-1 - 2d(k-1)$ such nodes. A union bound over at most $rd|V_t| \leq d^2 |V_t|$ pairs $((i_k,j),(b, k))$ implies that $\pi_k(\bG)$ is connected with probability at most $d^2|V_t|/(nd - 1-2d(k-1))$. A symmetric arguments gives an upper bound of $d^2|V_s|/(nd-1-2d(k-1))$ in case $i_k \in V_t$. Finally, if $i_k$ is in neither of $V_s$ and $V_t$, then there must still be an edge $((i_k,j), (a,k))$ for a group $a \in V_s$. We thus conclude 
\[
\Pr[\pi_k(\bG) \textrm{ is connected} \mid \pi_{k-1}(\bG)=\tau_{k-1}] \leq \frac{d^2\max\{|V_s|, |V_t|\}}{nd-1-2d(k-1)} \leq \frac{d^3 k}{nd-1-2dq}.
\]
Since this holds for every disconnected trace $\tau_{k-1}$, we finally conclude
\[
\Pr[\pi(\bG) \textrm{ is disconnected}] \geq \prod_{k=1}^q \left(1-\frac{d^3 k}{nd-1-2dq}\right) \geq 1 - \sum_{k=1}^q \frac{d^3 k}{nd-1-2dq} \geq 1-\frac{d^3 q^2}{nd-1-2dq}
\]
and thus
\[
\Pr[\pi(\bG) \textrm{ is connected}] \leq \frac{d^3 q^2}{nd-1-2dq}.
\]
For constant degree $d$, if $q = o(\sqrt{n})$, this is $o(1)$. Together with Lemma~\ref{lem:matching}, we have thus proved Theorem~\ref{thm:lbRegular}.

\bibliography{refs}
\bibliographystyle{abbrv}
\end{document}